\documentclass[a4paper,UKenglish,cleveref,autoref,thm-restate]{lipics-v2021}

\title{$\VP$, $\VNP$ and Algebraic Branching Programs over Min-Plus Semirings}

\author{Balagopal Komarath}{Department of Computer Science and Engineering, IIT Gandhinagar, Gujarat, India}{bkomarath@rbgo.in}{}{}

\author{Harshil Mittal}{Department of Computer Science and Engineering, IIT Madras, Chennai, India}{harshil@cse.iitm.ac.in}{https://orcid.org/0000-0002-4819-5711}{}

\author{Jayalal Sarma}{Department of Computer Science and Engineering, IIT Madras, Chennai, India}{jayalal@cse.iitm.ac.in}{https://orcid.org/0000-0002-4819-5711}{}

\authorrunning{Komarath et al.} 

\Copyright{Balagopal Komarath and Harshil Mittal and Jayalal Sarma} 

\ccsdesc[100]{\textcolor{red}{~}} 

\keywords{} 

\category{} 

\relatedversion{} 

\nolinenumbers 

\EventEditors{John Q. Open and Joan R. Access}
\EventNoEds{2}
\EventLongTitle{42nd Conference on Very Important Topics (CVIT 2016)}
\EventShortTitle{CVIT 2016}
\EventAcronym{CVIT}
\EventYear{2025}
\EventDate{December 24--27, 2016}
\EventLocation{Little Whinging, United Kingdom}
\EventLogo{}
\SeriesVolume{42}
\ArticleNo{23}

\usepackage{graphicx} 
\usepackage{nicematrix,complexity}
\usepackage{amsmath}
\usepackage{amsthm}
\usepackage{amssymb}
\usepackage{yhmath}

\usepackage[T1]{fontenc}
\usepackage{babel}

\usepackage[labelsep=space]{caption}

\newtheorem{thm}{Theorem}
\newtheorem{lem}[thm]{Lemma}

\usepackage{collect}

\def\movetoappendix{1}
\definecollection{appendix}
\makeatletter
\newenvironment{aproof}[2]
  { \@nameuse{collect}{appendix}
  { \subsection{#1} \label{#2} \begin{proof} } {\end{proof}}
  }{\@nameuse{endcollect}}
\makeatother

\makeatletter
\newenvironment{appsection}[2]
  { \@nameuse{collect}{appendix}
  { \subsection{#1} \label{#2} }
  {}
  }{\@nameuse{endcollect}}
\makeatother

\ifthenelse{\equal{\movetoappendix}{0}}{
        
        \renewenvironment{appsection}[2]{} {}
}{}

\begin{document}
\maketitle
\begin{abstract}
Arithmetic circuit complexity studies the complexity of computing polynomials using only arithmetic operations such as addition, multiplication, subtraction, and division. Polynomials over rings of integers model counting problems. Similarly, polynomials over semirings such as tropical semirings model optimization problems. Circuits over semirings then model so called \emph{pure} algorithms, algorithms that only use the operations in the semiring.

In this paper, we do a complexity-theoretic study of the power and limitations of circuits (which represent dynamic programs) over semirings:
\begin{itemize}
    \item We define $\mathsf{VNP}$ over min-plus semirings, which can faithfully represent problems such as computing min-weight perfect matchings and min-weight Hamiltonian cycles where we have efficiently verifiable certificates. Unlike over rings, we complement the values in the certificate for free as complementation is impossible over min-plus semirings. We prove a dichotomy theorem that states that if we only complement logarithmically many values, this class is same as $\mathsf{VP}$ over min-plus semirings. If we complement super-logarithmically many values, then $\mathsf{VNP} \neq \mathsf{VP}$.
    \item We consider constant-width ABPs (which are also called incremental dynamic programs that are restricted to use only a constant number of registers) and show that even simple problems like computing the min-weight $2$-edge-matching is impossible with width $2$ (or $2$ registers). However, with width $3$ (or $3$ registers), such programs can compute everything. More generally, we show that constant-depth formulas are efficiently simulated by constant-width ABPs.
    \item We show that an exponential hypercube sum (min in the semiring) over even provably weak models such as width-$2$ ABPs and products of linear forms are the same as $\mathsf{VNP}$.
\end{itemize}

\keywords{Min-plus semirings, Pure dynamic programming, Incremental dynamic program\-ming, Registers, Algebraic branching programs, ABP width, Circuit depth, Algebraic formulas.}

\acknowledgements{
We thank Dhara Thakkar for helpful discussions during the initial phase of this work. The first author would like to thank Bireswar Das and Anuj Tawari for helpful discussions related to tropical circuits. The second author is also grateful to the anonymous reviewers of his PhD thesis \cite{harshil2025diverse} (which contained a preliminary version of some part of the work presented herein, and it was done while the second author was a PhD student at IIT Gandhinagar).}

\end{abstract}

\newpage
\section{Introduction}

Arithmetic circuit complexity studies the complexity of computing polynomials using only addition and multiplication over an underlying field $\mathbb{F}$. The central open problem in arithmetic circuit complexity is an analogue of the $\mathsf{P}$ vs $\mathsf{NP}$ problem called the $\mathsf{VP}_{\mathbb{F}}$ vs $\mathsf{VNP}_{\mathbb{F}}$ problem. The class $\mathsf{VP}_{\mathbb{F}}$ consists of polynomial-families\footnote{More precisely, polynomial-families with polynomially-bounded degree \& number of variables.} over any field $\mathbb{F}$ that can be computed by polynomial-sized arithmetic circuits (using operations $+$ and $\times$), and the class $\mathsf{VNP}_{\mathbb{F}}$ consists of families expressible as an exponential size hyper-cube sum over families in $\mathsf{VP_{\mathbb{F}}}$. Here, the exponential size hyper-cube sum is the analogue of the $\exists$ quantifier over exponentially many certificates that are verifiable in $\mathsf{P}$ that defines $\mathsf{NP}$. It is also known that assuming the generalized Riemann Hypothesis, over the field of complex numbers $\mathbb{C}$, separating $\textsf{VP}_{\mathbb{C}}$ from  $\textsf{VNP}_{\mathbb{C}}$ is necessary to separate $\P$ from $\NP$ as well.

A natural generalization of arithmetic circuits is to allow semi-rings instead of rings. Such circuits computing over Boolean semi-rings (see \cite{Vollmer-Survey}) and $(\min, +)$-semi-rings over the natural numbers, which model an interesting subset of algorithms for optimization problems, (see Jukna~\cite{jukna2015lower}) have been studied. For example, several graph minimization problems take as input an $n$-vertex graph $G$ with costs (often reals or non-negative reals, i.e., from $\mathbb{R}$ or $\mathbb{R}_{\geq 0}$) associated to its edges (we may assume that missing edges of $G$ are present with a cost of $\infty$), and the goal is to minimize a certain linear function of costs of edges in $H$ over all subgraphs $H$ of $G$ satisfying a certain property $\Pi$. That is, such a problem aims to compute:

$$\underset{\substack{H:~H \text{ is a subgraph}\\ \text{ of } G \text{ satisfying } \Pi}}{\min}\Bigg(\underset{e\in E(H)}{\sum}c_{e}\cdot \operatorname{cost}(e)+d_e\Bigg),$$

where $c_e$'s and $d_e$'s are constants (i.e., independent of input edge costs) from $\mathbb{R}$ or $\mathbb{R}_{\geq 0}$. In particular, for many natural optimization problems, such as Shortest $s$-$t$ Path, Minimum Perfect Matching, Minimum Spanning Tree and Minimum Hamiltonian Cycle problems, the property $\Pi$ is that the subgraph $H$ of $G$ is an $s$-$t$ path, a perfect matching, a spanning tree and a Hamiltonian cycle of $G$ respectively, all $c_{e}$'s are $1$ and all $d_e$'s are $0$. When all $c_e$'s are non-negative integers (i.e., from $\mathbb{N}$), we can model the problem equivalently as the computation of the following polynomial function over min-plus semiring $\mathsf{R}:=(\mathbb{R}\cup\{\infty\}, \oplus, \otimes)$ or $\mathsf{R}^{+}:=(\mathbb{R}_{\geq 0}\cup\{\infty\}, \oplus, \otimes)$, where $\oplus$ (i.e., semiring's addition) denotes the minimum operation and $\otimes$ (i.e., semiring's multiplication) denotes the usual addition, and the value substituted for variable $x_e$ is the input $\operatorname{cost}(e)$ assigned to edge $e\in E(G)$:
$$\underset{\substack{H:~H \text{ is a subgraph}\\ \text{   of } G \text{ satisfying } \Pi}}{\bigoplus}\Bigg(\underset{e\in E(H)}{\bigotimes}d_e\otimes \underbrace{x_e\otimes x_e \otimes \ldots \otimes x_e}_{c_e \text{ times}}\Bigg)$$

Therefore, it is interesting to study the complexity of computing these polynomials using models that use only $\min$ and $+$ operations.

\subsection{Pure \texorpdfstring{$(\operatorname{min},+)$}{(min, +)} DP algorithms as Circuits}

Many problems of the above form admit polynomial-time dynamic programming (DP) algorithms. For example, Shortest $s$-$t$ Path problem can be solved in $\mathcal{O}(n^3)$ time using Bellman-Ford algorithm \cite{bellman1958routing}. When edge costs are non-negative, this algorithm can be formulated as a DP. The DP table has entries $\mathbb{T}(v, \ell)$'s for all $v\in V(G), \ell\in [n]$, where $\mathbb{T}(v, \ell)$ is the minimum cost of any $s$-to-$v$ path of length $\leq \ell$ in graph $G$, and it is computed using the recurrence $\mathbb{T}(v,\ell) = \underset{u\in V(G)\setminus \{v\}}{\min}\big(\mathbb{T}(u,\ell-1)+\operatorname{cost}(\{u,v\})\big)$. This algorithm can also be viewed as a circuit (over $\mathsf{R}^{+}$) of size $\mathcal{O}(n^3)$ and depth $\mathcal{O}(n)$ as follows: For each $\ell\in [n]$, the  circuit has a layer consisting of gates computing $\mathbb{T}(v,\ell)$'s for all $v\in V(G)$. The $v^{th}$ of these gates is a $\oplus$ (i.e., min) gate which is fed outputs of gates computing $\big(\mathbb{T}(u,\ell-1)+\operatorname{cost}(\{u,v\})\big)$'s for all $u\in V(G)\setminus \{v\}$. The $u^{th}$ of these gates is a $\otimes$ (i.e., $+$) gate which is fed $\mathbb{T}(u, \ell-1)$ (already computed by a lower layer) and the variable $x_{\{u,v\}}$. More generally, polynomial-time pure $(\min, +)$ DP algorithms (i.e., ones using only $\min$ and $+$ operations) can be viewed as polynomial-sized circuits over the corresponding min-plus semiring. So, lower bounds on size of $(\oplus, \otimes)$-circuits to compute a polynomial can be viewed as lower bounds on time needed by pure $(\min,+)$ DP algorithms to solve the corresponding problem.

In 2015, Jukna showed that the polynomials corresponding to Minimum Perfect Matching problem on bipartite graphs (i.e., Permanent polynomial) and Minimum Spanning Tree problem need $2^{\Omega(n)}$ sized circuits using only $\infty$ and $0$ as constants over $\mathsf{R}^{+}$ \cite{jukna2015lower}. So, although these problems can be solved in polynomial-time using Edmond's blossom algorithm (which works on non-bipartite graphs too) \cite{kolmogorov2009blossom} and Bor\d{u}vka's or Kruskal's or Prim's algorithm \cite{nevsetvril2001otakar, kruskal1956shortest, prim1957shortest} respectively, no pure $(\min, +)$ DP algorithm using only $\infty$ and $0$ as constants can solve them in polynomial (even sub-exponential) time over $\mathsf{R}^{+}$. In 1982, Jerrum and Snir showed that any circuit computing permanent needs $\geq n\cdot (2^{n-1}-1)$ many $\otimes$'s over $\mathsf{R}$ and $\mathsf{R}^{+}$ \cite{jerrum1982some}. So, even if pure $(\min, +)$ DP algorithms are allowed to use an unlimited number of $\min$ operations (i.e., contribution of $\min$'s to overall runtime is ignored), these algorithms remain less powerful than general algorithms.

It may be worth noting that the depth-reduction for circuits by Valiant, Skyum, Berkowitz and Rackoff \cite{valiant1983fast} has a semiring independent formulation (see Theorem 51 in \cite{jukna2015lower}) by Jukna.

\subsection{Incremental \texorpdfstring{$(\operatorname{min},+)$}{(min, +)} DP algorithms as ABPs}

The DP corresponding to Bellman-Ford algorithm is an incremental $(\min, +)$ algorithm, i.e., each $+$ operation has the input cost of an edge as one of its two arguments. That is, the corresponding $(\oplus, \otimes)$ circuit (described earlier) is skew, i.e., each $\otimes$ gate has an input variable as one of its two inputs. So, as skew circuits are equivalent to algebraic branching programs, this algorithm can be viewed as a width $\mathcal{O}(n)$ ABP (over $\mathsf{R}^{+}$) of size $\mathcal{O}(n^3)$ consisting of $\mathcal{O}(n)$ layers. Lower bounds on ABP size to compute a polynomial over min-plus semirings can be viewed as lower bounds on time needed by incremental $(\min, +)$ DPs to solve the corresponding problem.

\subsection{\texorpdfstring{$w$}{w}-Register Incremental \texorpdfstring{$(\operatorname{min},+)$}{(min, +)} DPs as Width-\texorpdfstring{$w$}{w} ABPs}
We can restrict every round/phase of the incremental DP algorithm to compute only a few (i.e., $w$) table entries using the $w$ table entries computed in the previous round. Such algorithms can be viewed as width-$w$ ABPs (i.e., ABPs wherein every layer has $\leq w$ nodes) over $\mathsf{R}$ or $\mathsf{R}^{+}$. Over fields, it is known that constant width ABPs can be efficiently simulated using formulas (see, for example, Proposition 7.1 in \cite{bringmann2018algebraic}); the same proof also works over $\mathsf{R}$ and $\mathsf{R}^{+}$. So, for any constant $w\geq 1$, $w$-register incremental $(\min, +)$ DP algorithms can be efficiently simulated using memoization-free $(\min, +)$ DP algorithms.

In 1988, Ben-Or \& Cleve also showed that width-$3$ ABPs can efficiently simulate (and so, are computationally equivalent to) formulas over fields \cite{cleve1988computing}. However, their construction is such that it works with additive inverses of polynomials; so, it does not work as is over semirings $\mathsf{R}$ and $\mathsf{R}^{+}$.

\subsection{Memoization-free Pure \texorpdfstring{$(\operatorname{min},+)$}{(min, +)} DP algorithms as Formulas}
A DP algorithm typically relies on the optimal substructure and overlapping subproblems properties of the problem to trade-off space for time. The first one says that the optimum solution of an instance can be found by combining optimum solutions for sub-instances. The second one says that the optimum solution of the same sub-instance may be needed to compute optimum solutions for multiple larger instances, and so memoization (i.e., storing its value so that it is readily available whenever needed, instead of recomputing it when needed) helps. Consider memoization-free pure $(\min, +)$ DP algorithms, i.e., pure $(\min, +)$ DPs wherein we restrict that the overlapping subproblems structure of the problem is not allowed to be exploited (i.e., any table entry computed in a round can be used to update only one other table entry in the next round). Such algorithms can be viewed as formulas (i.e., circuits wherein every gate has fan-out $\leq 1$) over $\mathsf{R}$ or $\mathsf{R}^{+}$. Valiant \cite{valiant1979completeness} showed that ABPs can efficiently simulate formulas over fields; this proof also works over $\mathsf{R}$ and $\mathsf{R}^{+}$. So, incremental $(\min, +)$ DPs can efficiently simulate memoization-free $(\min, +)$ DPs. 

In 1974, Brent showed that over rings, the depth of any formula can be reduced to logarithmic with only a polynomial size blow-up \cite{brent1974parallel}. In Appendix \ref{Brent's depth reduction N appendix}, we explain how his proof can be adapted to also work over $\mathsf{R}^{+}$.

\subsection{Related Work}

Jukna showed that there are polynomials that require exponential size for circuits \cite{jukna2015lower} and there are polynomials computable by poly-size circuits but require super-polynomial size ABPs \cite{DBLP:journals/orl/Jukna18}. Mahajan, Nimbhorkar, and Tawari \cite{mahajan2019shortest} showed lower bounds for constant-depth formulas with certain restrictions on fan-in computing the shortest path polynomial. Kluk and Nederlof \cite{kluk2025lower} showed that for any $k\geq 1 $, there is a graph on $k^{\mathcal{O}(1)}$ vertices such that any  $(\min,+)/(\max,+)$ circuit computing the corresponding Independent Set (resp. Travelling Salesman and Directed Spanning Tree) polynomial needs $\Omega(2^k)$ (resp. $2^{\Omega(k\log\log k)}$) gates. 

\subsection{Our Results}

Our main goal is to understand the computational power of all the four models discussed in the introduction and to establish a complexity-theoretic framework to study pure algorithms for optimization problems over $(\min, +)$ semirings.\\[-3mm]

\noindent{\bf Efficient Verifiability:} In Boolean and arithmetic complexity, the classes $\mathsf{NP}$ and $\mathsf{VNP}$ capture all efficiently verifiable problems. We define a complexity class in the $(\min, +)$ semiring setting that is a natural analogue of these classes. The straight-forward extension of the definition of $\mathsf{VNP}$ to semirings do not work as we can show that $\mathsf{VNP} = \mathsf{VP}$ with this definition. In Section \ref{VNP analogue section}, we define an analogue of $\textsf{VNP}$ over $\mathsf{R}$ and $\mathsf{R}^{+}$. Our definition allows summand to be a polynomial in complements of hypercube variables (apart from the original variables and hypercube variables); we discuss how, unlike fields, such a relaxation strengthens the obvious definition (i.e., without complements) over  $\mathsf{R}$ and $\mathsf{R}^{+}$. Intuitively, this happens because providing complemented bits compensates for the inability of a pure $(\min, +)$ DP/circuit over $\mathsf{R}$ and $\mathsf{R}^{+}$ to flip a bit (i.e., convert $0$ and $\infty$ to $\infty$ and $0$ respectively) on its own.

We have the following verifier-based interpretation of $\mathsf{VNP}$ over min-plus semirings: For any $f\in \mathsf{VNP}$, there exists $g\in \mathsf{VP}$ such that $f(X)$ is the minimum (i.e., $\oplus$) of $g(X, Y, \overline{Y})$ over all $\infty$-$0$ substitutions of $Y$ variables. Given any substitution $X^{*}$ of input variables $X$ and a constant $c$, suppose our goal is to check whether $f(X^{*}) \leq  c$. When $f(X^{*}) \leq c$, there is an $\infty$-$0$ substitution $Y^{*}$ of $Y$ variables that attained the minimum, i.e., for which $g(X^{*},Y^{*}, \overline{Y^{*}}) = f(X^{*}) \leq c$. View this substitution $Y^{*}$ (and its complement $\overline{Y^{*}}$) as a `certificate'. Also, if $f(X^{*}) > c$, then no matter what certificate (i.e., $\infty$-$0$ substitution $Y^{*}$ of $Y$ variables) is given, we always have $g(X^{*}, Y^{*}, \overline{Y^{*}}) \geq f(X^{*}) > c$. This way, we can view polynomial-sized circuit/polynomial-time pure $(\min, +)$ DP computing $g$ as a `verifier'. As such verifiers cannot flip bits, we restrict certificates to also contain flipped values of all bits.

We analyze the relationship between $\textsf{VP}_{\mathsf{R} (\text{or } \mathsf{R}^{+})}$ and $\textsf{VNP}_{\mathsf{R} (\text{or } \mathsf{R}^{+})}$ as the number of hypercube variables allowed to be complemented is varied in $\textsf{VNP}_{\mathsf{R} (\text{or } \mathsf{R}^{+})}$'s definition. We show the following dichotomy theorem (the super-script denotes the number of complemented hyper-cube variables):

\begin{restatable}[]{theorem}{VPVNPthm}
For semirings $S = \mathsf{R}$ or $S = \mathsf{R}^{+}$,
$\mathsf{VP}_{S} \neq \mathsf{VNP}_{S}^{[r(n)]}$ when $r(n)=\omega(\log n)$, and $\mathsf{VP}_{S} = \mathsf{VNP}_{S}^{[r(n)]}$ when $r(n)=\mathcal{O}(\log n)$, where $r(n)$ denotes the number of hypercube variables allowed to be complemented in the definition of $\mathsf{VNP}_S$.
\end{restatable}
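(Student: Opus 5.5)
The plan is to handle the two regimes separately, working throughout with the definition of $\mathsf{VNP}_S^{[r(n)]}$ from Section~\ref{VNP analogue section}. There $f(X)=\bigoplus_{Y\in\{0,\infty\}^{m}} g(X,Y,\overline{Y_R})$, where $g\in\mathsf{VP}_S$, $m=\mathrm{poly}(n)$, and only a set $R$ of $|R|\le r(n)$ hypercube variables appears complemented.

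\textbf{Case $r(n)=\mathcal{O}(\log n)$: showing $\mathsf{VNP}_S^{[r]}\subseteq\mathsf{VP}_S$.} The key observation is that the uncomplemented hypercube variables can be eliminated for free, by monotonicity. In a min-plus polynomial, every monomial containing some $y_j$ with $j\notin R$ is minimized by setting $y_j=0$: substituting $\infty$ only deletes monomials, while $0$ keeps them with unchanged value. Since $\oplus$ is min, we get
\[
\bigoplus_{Y_{\bar R}} g(X,Y_R,Y_{\bar R},\overline{Y_R}) = g(X,Y_R,0,\overline{Y_R}).
\]
I will state this as a lemma and prove it at the level of the formal polynomial computed by the circuit. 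This works because the circuit's function is the min over its monomials of the monomial values. Substituting $0$ for those variables is just a constant substitution, so size stays polynomial. What remains is a min over the $2^{|R|}\le n^{\mathcal{O}(1)}$ assignments to $Y_R$, with the complements fixed accordingly. That is polynomially many copies of a polynomial-size circuit combined by one $\oplus$ gate, so $f\in\mathsf{VP}_S$. Degree stays polynomially bounded because constants $0,\infty$ do not raise the degree. The reverse inclusion $\mathsf{VP}_S\subseteq\mathsf{VNP}_S^{[r]}$ is trivial, since we can take $g=f$ with no hypercube variables.

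\textbf{Case $r(n)=\omega(\log n)$: separation.} I would exhibit a polynomial family in $\mathsf{VNP}_S^{[r]}$ that requires super-polynomial circuits, via an exponential $\mathsf{VP}$ lower bound such as Jukna's result that the permanent (min-weight perfect matching) needs $2^{\Omega(n)}$ size. That bound is for $\infty,0$ constants over $\mathsf{R}^+$; the separation needs a version valid for the paper's $\mathsf{VP}_S$ over both $\mathsf{R}$ and $\mathsf{R}^+$. Alternatively, I would use the Jerrum--Snir bound on the number of $\otimes$ gates, which holds over both semirings for arbitrary constants. I then apply padding. Take $k=r(n)$, set $k$ so that $k=\omega(\log n)$ and $k\le n$, and consider $\mathrm{PER}_{k}$ on a fresh block of $k^2$ input variables among the $n$ variables. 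This needs only $k^2\le\mathrm{poly}(n)$ variables; if needed I choose the number of variables $N$ with $r(N)\ge$ the needed count. Then $\mathrm{PER}_k$ needs size $2^{\Omega(k)}=n^{\omega(1)}$, so it is not in $\mathsf{VP}_S$.

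The main obstacle is putting $\mathrm{PER}_k$ into $\mathsf{VNP}_S^{[r]}$ with only about $k$ complemented bits rather than $k^2$. A naive verifier guesses the permutation matrix $Y$ of $k^2$ bits and needs complements to enforce ``exactly one $1$ per row''. To avoid this, I would encode each row's chosen column by $\lceil\log k\rceil$ bits, so there are $k\log k$ certificate bits. Choosing $k$ with $k\log k\le r(n)$ keeps $k=\omega(\log n)/\log\log n$ still super-logarithmic enough for $2^{\Omega(k)}$ to be super-polynomial; if not, I would instead take $k=\omega(\log n)$ slowly growing. With complements available for all of these bits, the verifier can compute the selector indicators $[\sigma(i)=j]=\bigotimes_b (y_{i,b}\text{ or }\overline{y_{i,b}})$ by products. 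It can compute $\bigotimes_i\bigoplus_j [\sigma(i)=j]\otimes x_{ij}$, and it can enforce injectivity by adding $\infty$ whenever two rows select the same column. I would write that test as an $\oplus$ over pairs, equality-checked using the complements, added as a penalty. I expect this penalty step to need care, because an ``$\infty$ if equal'' test is not directly min-plus expressible. If it fails, I would switch to a polynomial that needs no injectivity check but is still hard. A candidate is a Jukna-style hard family computable as a min over all $2^k$ subsets of a $k$-bit guessed set, with the complemented bits used to write the set's characteristic product $\bigotimes_i (y_i\otimes x_i\ \text{or}\ \overline{y_i})$.
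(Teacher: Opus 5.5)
Your argument for $r(n)=\mathcal{O}(\log n)$ is correct. It reaches the paper's formula $f_n=\bigoplus_{\underline{\sigma}}g_{p(n)}|_{\Phi_{\underline{\sigma}}}$ more directly than the paper's monomial bookkeeping: a min-plus polynomial is non-decreasing in each argument, so every uncomplemented $y_j$ can be fixed to $0$, leaving $2^{r(n)}$ polynomial-size pieces. In the separation, the injectivity penalty you worry about is not an obstacle. You test that two rows \emph{differ}, via $\bigotimes_{u<v}\bigoplus_{w}\bigl(\overline{Y_{u,w}}\otimes Y_{v,w}\oplus Y_{u,w}\otimes\overline{Y_{v,w}}\bigr)$, exactly as the paper does. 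Note that this is a $\otimes$ over pairs, not a $\oplus$.

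The genuine gap is the parameter count in the separation.
\begin{itemize}
\item Row-wise binary encoding needs $k\lceil\log k\rceil\le r(n)$ complemented bits. The Jerrum--Snir bound you use is $k(2^{k-1}-1)=2^{\Theta(k)}$, so you only get $2^{\Omega(r/\log r)}$.
\item That is polynomial for, e.g., $r(n)=\log n\cdot\log\log n$, where the budget forces $k\le\log n$. Your claim that $\omega(\log n)/\log\log n$ is still super-logarithmic is false, and the fallback $k=\omega(\log n)$ breaks $k\log k\le r(n)$.
\item No better lower bound can save this encoding. The recurrence $T(S)=\bigoplus_{j\in S}T(S\setminus\{j\})\otimes x_{|S|,j}$ computes $perm_k$ with $\mathcal{O}(k2^k)$ gates, matching Jerrum--Snir exactly. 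Hence $(perm_{\lfloor\log n\rfloor})_{n}\in\mathsf{VP}_S$.
\item The paper closes the count by asserting a $2^{\Omega(\beta\log\beta)}$ circuit bound for $perm_\beta$, which this upper bound refutes. So the paper's argument has the same hole, and, like yours, it only covers $r(n)=\omega(\log n\log\log n)$.
\end{itemize}

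One repair is to spend the complemented bits more economically:
\begin{itemize}
\item Let $L=\lceil\log n\rceil$. Guess only the checkpoint sets $S_j=\sigma(\{1,\ldots,jL\})$ as characteristic vectors, which costs about $k^2/L$ complemented bits.
\item Check nesting and $|S_j|=jL$ with polynomial-size formulas in the literals.
\item Let the verifier compute each block permanent on rows $(j-1)L+1,\ldots,jL$ and columns $S_j\setminus S_{j-1}$ by the subset recurrence, in size $\mathcal{O}(L2^L)=n^{\mathcal{O}(1)}$. Select the block's entries with rank-within-$(S_j\setminus S_{j-1})$ indicators.
\item Then $k\approx\sqrt{r(n)\log n}=\omega(\log n)$, and $perm_k\in\mathsf{VNP}_S^{[r(n)]}\setminus\mathsf{VP}_S$.
\end{itemize}
Alternatively, your ``Jukna-style'' fallback would work, but only if you exhibit a concrete family whose $(\min,+)$ circuit complexity is exponential in the number of guessed bits, over both $\mathsf{R}$ and $\mathsf{R}^{+}$.
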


Also, in Remark \ref{remark: ruling out exponential separation}, we discuss that the analysis involved in the proof of the above theorem also shows that $\mathsf{VP}_{\mathsf{R} (\text{or } \mathsf{R}^{+})}$ and $\mathsf{VNP}_{\mathsf{R} (\text{or } \mathsf{R}^{+})}^{[r(n)]}$ cannot be exponentially separated when $r(n) = o(n)$; in contrast, these are exponentially separated when $r(n) = \Omega(n)$. 

Similar to the case of fields, $\mathsf{VNP}_{\mathsf{R} (\text{or } \mathsf{R}^{+})}$ contains the permanent and Hamiltonian cycle families.  How\-ever, unlike fields (of characteristic $\neq 2$), a result by Grochow \cite{grochow2017monotone} implies that Hamiltonian cycle cannot be obtained as an efficient projection of permanent over $\mathsf{R}^{+}$; so, permanent is not $\mathsf{VNP}_{\mathsf{R}^{+}}$ complete. \\[-3mm]

\noindent{\bf ABP width as a resource:} 
A famous result in arithmetic circuit complexity by Ben-Or and Cleve shows that poly-size formulas and width-$3$ ABPs have the same computational power. The proof of this theorem has two main parts. First, Brent \cite{brent1974parallel} proved that any formula of size $s$ has an equivalent formula of depth $O(\log s)$. Second, Ben-Or and Cleve \cite{cleve1988computing} proved that a formula of depth-$d$ can be simulated by width-$3$ ABPs of size $O(\exp(d))$. 

One of the striking results in the arithmetic circuits over fields is that of Brent's depth reduction that shows that any size $s$ formula can be simulated by size $s^{O(1)}$ formula of depth $O(\log s)$. In Appendix \ref{Brent's depth reduction N appendix}, we adapt it to work over semiring $\mathsf{R}^{+}$. In the original proof, there is a critical a step that needs a polynomial to have additive inverse; so, this step does not work as is over min-plus semirings. Over $\mathsf{R}^{+}$, we circumvent this difficulty by making use of absorption in a certain way (see Appendix \ref{Brent's depth reduction N appendix} for details).
We leave the case of $\mathsf{R}$ as an open problem.
\begin{proposition}[Brent's Depth reduction for Formulas over $\mathsf{R}^{+}$]
\label{Adapting Brent's reduction over R+ proposition}
$~$\\
Any size $s$ formula over $\mathsf{R}^{+}$ can be simulated by a size $s^{\mathcal{O}(1)}$ formula of depth $\mathcal{O}(\log s)$.
\end{proposition}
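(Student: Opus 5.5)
The plan is to follow Brent's divide-and-conquer scheme and replace the single ring-dependent step, the one that uses subtraction, with an absorption identity that holds over $\mathsf{R}^{+}$ because all values there are at least $0$, the multiplicative identity of the semiring.

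\emph{Step 1 (linearity in a single occurrence).} Let $F$ be a formula over $\mathsf{R}^{+}$, and let $y$ be a fresh variable that occurs exactly once in $F$. By induction on the structure of $F$, I will show that there are formulas $A$ and $B$, not involving $y$, such that $F \equiv (A\otimes y)\oplus B$ as functions on $\mathsf{R}^{+}$.
\begin{itemize}
\item At a leaf equal to $y$, take $A=0$ and $B=\infty$.
\item At a $\oplus$ gate whose child containing $y$ is $(A'\otimes y)\oplus B'$ and whose other child is $h$, take $A=A'$ and $B=B'\oplus h$.
\item At a $\otimes$ gate, take $A=A'\otimes h$ and $B=B'\otimes h$, using distributivity.
\end{itemize}
Now substitute the semiring zero and one for $y$. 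This gives $F|_{y=\infty}=B$ and $F|_{y=0}=A\oplus B$. Over a ring one would recover $A$ as a difference. Here I instead claim
\[
F \equiv (F|_{y=0}\otimes y)\oplus F|_{y=\infty}.
\]
Indeed, the right-hand side equals $\min(A+y,\,B+y,\,B)$. Since $y\ge 0$ in $\mathsf{R}^{+}$, we have $B+y\ge B$, so the middle term is absorbed and the expression equals $\min(A+y,B)=F$. This is exactly where nonnegativity is used, and it is why the argument does not extend to $\mathsf{R}$.

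\emph{Step 2 (recursion).} Given a formula $F$ of size $s$, choose a gate $g$ whose subformula $F_g$ has size between $s/3$ and $2s/3$. Such a gate exists by the standard walk down the tree: repeatedly move to the larger child until the size first drops to at most $2s/3$; since every gate has fan-in at most $2$, the size at that point is still at least $s/3$. Let $F'$ be $F$ with the subtree at $g$ replaced by $y$. By Step 1,
\[
F \equiv (F'|_{y=0}\otimes F_g)\oplus F'|_{y=\infty}.
\]
Each of the three formulas $F'|_{y=0}$, $F'|_{y=\infty}$ and $F_g$ has size at most $2s/3+O(1)$. Recursively replace each by an equivalent shallow formula and combine them with two extra gates. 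This gives the recurrences $D(s)\le D(2s/3)+2$ and $S(s)\le 3S(2s/3)+O(1)$. They solve to depth $O(\log s)$ and size $s^{\log_{3/2}3}=s^{O(1)}$.

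\emph{Step 3 (constants).} Substituting $\infty$ may introduce a constant that the model disallows. If so, I will eliminate it by local simplification before recursing, using $\infty\otimes h=\infty$ and $\infty\oplus h=h$. This simplification never increases size or depth, and it either removes $\infty$ entirely or makes the whole formula the constant $\infty$. The constant $0$ is likewise simplified using $0\otimes h=h$, provided the model does not already allow it.

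The main obstacle is Step 1's identity, that is, finding a substitute for ``$A=F|_{y=1}-F|_{y=0}$.'' The absorption argument resolves it, so the remaining care is only in checking that the induction is valid at every gate and that the balanced-split bookkeeping goes through unchanged from Brent's proof.
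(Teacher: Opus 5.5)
Your proposal is correct and follows essentially the same route as the paper. Both arguments split at the first node with subformula size at most $2s/3$, write the residual formula as $(A\otimes y)\oplus B$, and use the absorption $(B\otimes y)\oplus B = B$ (the paper applies it directly with $p(F_v)\ge 0$ in place of $y$) to obtain $p(F) = (F'|_{y=0}\otimes F_g)\oplus F'|_{y=\infty}$, then solve the same size and depth recurrences. Your inductive justification of the linear form in $y$ and your remark on eliminating the constants $\infty$ and $0$ are small additions that the paper leaves implicit.
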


It remains to investigate the power of constant width ABPs over semirings. First, we exhibit a simple polynomial that width-$2$ ABPs cannot compute.
\begin{restatable}[Non-universality of Width-2 ABPs]{theorem}{nonuniversalwidthtwosemiringN}
\label{non-universal width-2 semiring N}
$~$\\No width-$2$ ABP can compute $(x_1\otimes  y_1)\oplus (x_2\otimes y_2)\oplus (x_3\otimes y_3)$ over $\mathsf{R}$ and $\mathsf{R}^{+}$.
\end{restatable}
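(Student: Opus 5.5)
We work with the functional semantics: a width-$2$ ABP computes the $\oplus$ (minimum) over all source-to-sink paths of the $\otimes$-product (sum) of the edge labels. We take each edge label to be a single variable or a constant. If labels are affine forms $\bigoplus_j (c_j\otimes z_j)\oplus c_0$, we first subdivide each edge into parallel edges; this leaves the layering, and hence the width, unchanged up to inserting intermediate layers. The plan is to suppose a width-$2$ ABP computes $f=(x_1\otimes y_1)\oplus(x_2\otimes y_2)\oplus(x_3\otimes y_3)$ and reach a contradiction, in three steps: (i) every path must contain a full pair $\{x_i,y_i\}$; (ii) for each $i$ there is a ``pure'' path reading $x_i$ and $y_i$ exactly once each; (iii) width $2$ lets us splice two pure paths into a bad path.

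\emph{Step 1 (every path contains a pair).} Take any path $P$ with finite total constant. Substitute $0$ for every variable occurring on $P$ and $\infty$ for all other variables. Then $P$ has a finite value, so the ABP outputs a finite value. But if $P$ contains no full pair $\{x_i,y_i\}$, then every term of $f$ receives an $\infty$, so $f=\infty$, a contradiction. Paths with infinite constants can be discarded.

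\emph{Step 2 (pure paths).} Fix $i$, set $x_i=y_i=t$, and set all other variables to $\infty$. Then $f=2t$ for all admissible $t$ (all $t\in\mathbb{R}$, or all $t\ge 0$ over $\mathsf{R}^{+}$). The surviving paths use only $x_i,y_i$, so each contributes $c_P+d_P t$. A minimum of finitely many such linear functions equals $2t$ on the whole range only if some path has $c_P=0$ and $d_P=2$; over $\mathsf{R}^{+}$ this uses $c_P\ge 0$ and evaluation at $t=0$ and at large $t$. By Step 1, such a path contains both $x_i$ and $y_i$, so it reads each exactly once. Call it $P_i$.

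For $P_i$, let $b_i$ be the layer just before the first of its two variable edges, and $a_i$ the layer just after the second; then $b_i<a_i$. The prefix of $P_i$ up to a layer $L<a_i$ misses $x_i$ or $y_i$, and the suffix from a layer $L>b_i$ also misses one of them.

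\emph{Step 3 (splicing).} Suppose $P_i$ and $P_j$ ($i\neq j$) pass through the same node at layer $L$. Then the prefix of one followed by the suffix of the other is also a source-to-sink path, and it uses only variables from pairs $i$ and $j$. By Step 1 it must complete a pair. So it is impossible to have $L<a_i$ and $L>b_j$ simultaneously, and symmetrically it is impossible to have $L<a_j$ and $L>b_i$.

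If $L$ lies in the open interval $I_i=(b_i,a_i)$, these two exclusions force $L\le b_j$ and $L\ge a_j$, contradicting $b_j<a_j$. Hence inside $I_i$, $P_i$ shares its node with neither other pure path. Since a layer has at most $2$ nodes, the other two pure paths must then share the remaining node. In particular the intervals $I_1,I_2,I_3$ are pairwise disjoint, because at a layer in $I_j$ the path $P_j$ must also be alone. Each $I_i$ contains at least one layer, namely the endpoint of the first variable edge.

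Order the intervals, say $I_1<I_2<I_3$, and pick $L\in I_2$. There $P_1$ and $P_3$ share a node, and $L\ge a_1$ and $L\le b_3$. Now splice the prefix of $P_3$ up to $L$ with the suffix of $P_1$ from $L$:
\begin{itemize}
\item the prefix misses a variable of pair $3$, since $L<a_3$;
\item the suffix misses a variable of pair $1$, since $L>b_1$;
\item neither piece contains any variable of pair $2$.
\end{itemize}
So the spliced path contains no full pair, contradicting Step 1.

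The main obstacle I expect is the bookkeeping in Step 2: extracting a pure path that reads each variable exactly once, so that $b_i<a_i$ holds strictly. This needs care with constants, which may be negative over $\mathsf{R}$, and with the restricted domain $t\ge 0$ over $\mathsf{R}^{+}$. Once $b_i<a_i$ is established, the pigeonhole-and-splice argument is purely combinatorial.
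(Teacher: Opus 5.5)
Your proof is correct. It shares the paper's cut-and-paste skeleton: one pure path $P_i$ per monomial, then splicing two of them at a shared node of a width-$2$ layer. Both halves, however, are carried out differently.

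\textbf{How the paper does it.} The paper obtains the $P_i$ by splitting the path-weight sum into types (repeated $x_1$, repeated $y_1$, extra variable). It kills each type using evaluations at $\{0,\infty\}$ points plus a monotonicity argument. It then locates the first variable edge on $P_1\cup P_2\cup P_3$ and the next one on the remaining two paths. From there it case-splits on $j\neq i$ versus $j=i$ and on top/bottom orientations. Each possible spliced weight ($c$, $c\otimes y_1$, $c\otimes x_2\otimes y_1$, \dots) is refuted with its own finite substitution, and some orientation cases are only asserted to be similar.

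\textbf{How yours differs.} Your Step~1 lemma (every finite-weight path must contain a full pair $\{x_i,y_i\}$) makes all of those individual refutations automatic. The diagonal substitution $x_i=y_i=t$ gives the pure paths in a few lines. Your interval argument replaces the orientation bookkeeping with a complete argument: each $P_i$ is alone at every layer strictly between its two variable edges, so the $I_i$ are pairwise disjoint. The two outer paths must therefore meet inside the middle interval, where the splice carries no variable at all.

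\textbf{Extra generality, and a flawed side remark.} Your argument only uses shared nodes. It never uses that two nodes are joined by a single edge, whereas the paper does, when it argues that $P_3$'s edge must differ from the $x_2$-labelled one. So your Steps~1--3 apply verbatim to layered multigraphs whose parallel edges carry labels $c\otimes z$ or $c$. That is, they cover width-$2$ ABPs with affine edge labels, which the paper's conclusion says its own argument does not reach.

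Your stated reduction for that case is wrong as written, however. Splitting an affine label into parallel edges gives labels $c_j\otimes z_j$, not bare variables or constants. Subdividing those into two edges each would add one node per term to the inserted layer and destroy width $2$. You do not need the subdivision: drop it and run Steps~1--3 directly on the parallel-edge multigraph.
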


So, 2-register incremental $(\min, +)$ DPs cannot compute (even with unlimited runtime), for example, minimum weight 2-matching of $K_4$ (i.e., complete graph on four vertices), or minimum weight shortest path of $K_{2,3}$ (i.e., complete bipartite graph with parts of sizes two and three) from one vertex $s$ of the two-sized part to its other vertex $t$; see Figure \ref{K4 and K23 figure}.

\begin{figure}[ht!]
\centering
\includegraphics[scale=0.75]{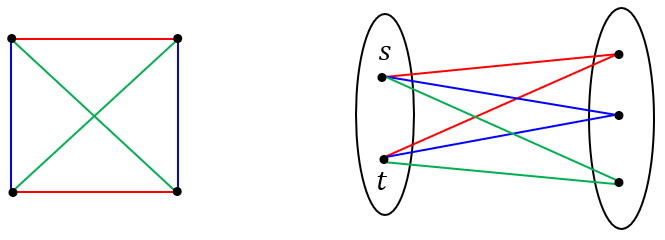}
\caption{: The left image shows $K_4$ and its three 2-matchings highlighted in red, blue and green. The right image shows $K_{2,3}$ and its three $s$-to-$t$ paths highlighted in red, blue and green.}
\label{K4 and K23 figure}
\end{figure}

This impossibility is proved using a cut-and-paste argument that is similar to some lower bounds for monotone arithmetic circuit models. We analyze how the paths producing the required monomials intersect with each other; at an intersection point, their sub-paths are patched together to identify a new path which produces a problematic monomial. For monotone arithmetic circuits, the problematic monomial can be any monomial that does not divide any monomial in the polynomial since the model does not allow cancellations. However, this is not sufficient for $(\min, +)$ semirings as monomials can be cancelled by absorption (e.g., $x \oplus (x\otimes y) = x$ over $\mathsf{R}^{+}$, $x^{\otimes 2} \oplus y^{\otimes 2} \oplus( x\otimes y) = x^{\otimes 2} \oplus y^{\otimes 2}$ over both $\mathsf{R}$ and $\mathsf{R}^{+}$). Over these semirings, to prove that a monomial is problematic, we have to show that the value of the monomial is less than the polynomial's value for some substitution of the variables by constants. Indeed, this is what we do in our impossibility proof. This polynomial has also been used to prove impossibility results for restricted width-$2$ ABPs over fields \cite{DBLP:conf/fsttcs/SahaSS09}.

 In contrast, we show that 3-register incremental $(\min, +)$ DPs can compute any polynomial. We prove the more general result:
 \begin{restatable}[Low Depth Formulas to low Width ABPs]{theorem}{lowdepthformulastolowwidthABPs}
\label{Low depth circuits to low width ABPs}
$~$\\For any $p\geq 1$, consider any depth $2p$ alternating $\bigoplus\bigotimes\ldots$ $\bigoplus\bigotimes$ formula $C$ of size $s$ over $\mathsf{R}$ or $\mathsf{R}^{+}$. Then, there is a width $(2p+1)$ ABP of size $\mathcal{O}(p\cdot s)$ that simulates the formula $C$.
\end{restatable}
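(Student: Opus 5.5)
The plan is to prove a stronger, compositional statement by induction on $p$. For every alternating formula $F$ of depth $2q$ whose top gate is $\bigoplus$, I will build a layered gadget $\Gamma(F)$ of width at most $2q+1$ with one designated entry node in its first layer and one designated exit node in its last layer. The invariant is that the $\bigoplus$ over all entry-to-exit paths of the $\bigotimes$ of edge labels equals $F$, and that the number of layers is $\mathcal{O}(|F|)$. Applying this with $q=p$ and taking the entry as source and the exit as sink gives the theorem. Throughout I use only the semiring identities: $0$ as the multiplicative identity (for "wire" edges) and the absorbing behaviour of $\infty$ (absent edges). Both are available in $\mathsf{R}$ and $\mathsf{R}^{+}$, so the construction is the same for both semirings and needs no inverses. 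This is exactly what distinguishes it from Ben-Or--Cleve.

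\textbf{Product gates.} A leaf (variable or constant) is a single edge, which is a width-$1$ gadget. Now take a $\bigotimes$ gate $G=H_1\otimes\cdots\otimes H_k$, where each $H_j$ is a $\bigoplus$-gate (or a leaf) of depth at most $2q-2$. Its gadget is the serial concatenation of $\Gamma(H_1),\dots,\Gamma(H_k)$: the exit node of $\Gamma(H_j)$ is identified with the entry node of $\Gamma(H_{j+1})$, or joined to it by a $0$-labelled edge. Every entry-to-exit path then splits uniquely into sub-paths through the $H_j$'s. By distributivity the path sum is $\bigotimes_j H_j$. The width stays at most $\max_j \mathrm{width}(\Gamma(H_j))\le 2q-1$.

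\textbf{Sum gates.} Take a $\bigoplus$ gate $F=G_1\oplus\cdots\oplus G_m$ with $\bigotimes$-children $G_i$. Here I add two extra "wires" running through all layers of the gadget:
\begin{itemize}
\item a \emph{start} wire, whose first node is the entry;
\item a \emph{done} wire, whose last node is the exit.
\end{itemize}
Consecutive nodes on each wire are joined by $0$-labelled edges. The gadgets $\Gamma(G_1),\dots,\Gamma(G_m)$ are placed one after another in the remaining rows. The entry of $\Gamma(G_i)$ receives a $0$-edge from the start wire in the preceding layer. The exit of $\Gamma(G_i)$ sends a $0$-edge to the done wire in the next layer. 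There are no other edges. This gives width at most $(2q-1)+2=2q+1$, and the recursion $W(2q)=W(2q-2)+2$ with $W(0)=1$ yields $2p+1$. The number of layers is $\sum_i(\text{layers of }\Gamma(G_i)+\mathcal{O}(1))$, which is $\mathcal{O}(|F|)$. Since each layer has at most $2p+1$ nodes, the total size is $\mathcal{O}(p\cdot s)$.

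\textbf{Main obstacle: no spurious paths.} I expect the real work to be checking that each gadget computes exactly $F$, with no extra paths, together with the layer alignment. I would argue that every entry-to-exit path in the $\bigoplus$-gadget has a fixed three-part shape:
\begin{itemize}
\item it follows the start wire for a while;
\item it leaves the start wire through exactly one entry edge into some $\Gamma(G_i)$;
\item it stays inside $\Gamma(G_i)$ until that gadget's exit, then follows the done wire to the end.
\end{itemize}
This holds because the start wire only feeds forward, the done wire only receives edges, and distinct $\Gamma(G_i)$ occupy disjoint layer intervals. The last point also means that internal start and done wires of a child can reuse the same rows as the parent's, as long as they are disjoint in time. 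The path sum is therefore $\bigoplus_i(0\otimes G_i\otimes 0)=F$.

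\textbf{Layer alignment.} Edges must go only between consecutive layers, so children of different depths must be padded. I would pad with $0$-edges on the carrying row, which changes no value. The point to verify is that padding never pushes the width above $2p+1$ at any layer.
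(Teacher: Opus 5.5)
Your construction is correct, and it reaches the bound by a more elementary route than the paper's. The paper never argues about paths. Instead, for each level $i$ it fixes a $(2p+1)\times(2p+1)$ format matrix $M_i(f)$: value $f$ at $(1,1)$, plus an anti-diagonal of $0$'s from $(1,2p+2-i)$ to $(2p+1-i,2)$, whose first entry is a bypass out of the entry register. It then verifies three matrix identities and chains them bottom-up:
\begin{itemize}
\item $M_1(f)\to M_1(f\otimes x)$, via a diagonal factor;
\item $M_i(f),M_{i-1}(g)\to M_i(f\otimes g)$, via a column-permuted $M_i(f)$ times $M_{i-1}(g)^{T}$;
\item $M_i(f),M_{i-1}(g)\to M_i(f\oplus g)$, after row/column operations on $M_{i-1}(g)$.
\end{itemize}

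Underneath, the wiring is essentially yours. There is an accumulator playing your done wire, the bypass playing your start wire, and a register in which the child is computed. The bookkeeping differs, though. Each format carries its own bypass, so every level, $\bigotimes$ levels included, costs one register: the anti-diagonal loses one entry per level. Permutations and transposes are used to restore the canonical register layout after each step. Your accounting instead makes $\bigotimes$ free and charges $2$ per $\bigoplus$ level.

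What your version buys:
\begin{itemize}
\item correctness reduces to a one-line path decomposition;
\item arbitrary fan-ins and leaves at mixed depths are handled with no change, whereas the paper's count $\mathcal{O}(s_1\cdots s_{2p})$ is phrased for uniform fan-in per level.
\end{itemize}
What the paper's version buys is an explicit register-program form in the style of Ben-Or--Cleve. Each step is a fixed sparse matrix, so the $\mathcal{O}(p)$ edges per layer is immediate.

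Two minor points on your write-up. First, the layer-alignment concern is vacuous. Children are only ever composed in sequence, never in parallel, so wires simply run as long as needed and nothing is padded. Second, the aside about a child's internal wires reusing the parent's rows is unnecessary. For a middle child both parent wires are live, which is precisely why the width grows by $2$. What is true is that siblings' internal wires can share rows.
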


Note that all polynomials are computable by formulas of depth $2$. By substituting $p = 1$ in the above theorem, we conclude that width-$3$ ABPs can compute any polynomial. In particular, minimum weight 2-matching of $K_n$ and minimum weight shortest $s$-to-$t$ path of $K_{2, n}$ have poly-size width-$3$ ABPs because the corresponding polynomials are sparse (i.e., have polynomial-sized $\bigoplus \bigotimes$ formulas). This illustrates a stark contrast between power of $2$-register and $3$-register incremental $(\min, +)$ DPs.

Interpreted for arbitrary $p$, the above results shows that any memoization-free alternating $(\min, +)$ DP algorithm that is highly parallelized (i.e., consists of only a few update rounds) can be efficiently simulated using an incremental $(\min, +)$ DP algorithm that updates only a few table entries per round.

Combined with a known DP algorithm for Shortest Path problem based on merging equal-length subpaths, Theorem \ref{Low depth circuits to low width ABPs} gives the following corollary (see details in Section \ref{low depth circuits to low width ABPs section}):
\begin{restatable}[Width-reduction for ABPs]{corollary}{ABPwidthreductioncorollary}
\label{ABPwidthreductioncorollary}
$~$\\ Any ABP of size $s$ over $\mathsf{R}$ or $\mathsf{R}^{+}$ can be converted into an equivalent width $(2p+1)$ ABP of size $s^{\mathcal{O}(p\cdot s^{\frac{1}{p}})}$ for any $p$. So, in particular, any ABP over $\mathsf{R}$ or $\mathsf{R}^{+}$ can be converted into an equivalent logarithmic width ABP with a quasi-polynomial blow-up in its size.
\end{restatable}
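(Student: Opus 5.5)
Given an ABP of size $s$, I will first convert it into a low-depth alternating formula and then apply Theorem~\ref{Low depth circuits to low width ABPs}. The natural intermediate object is the \emph{path polynomial} of the layered DAG. The polynomial computed by an ABP is the $\oplus$ over all source-to-sink paths of the $\otimes$ of the edge labels. We may assume the ABP is layered with at most $s$ layers and at most $s$ nodes overall; layering costs only a polynomial blow-up, achieved by subdividing edges with constant $0$ labels, which are the multiplicative identity.

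For nodes $u,v$ and length $\ell$, let $P_\ell(u,v)$ denote the $\oplus$ over all $u$-to-$v$ paths of length exactly $\ell$ of the product of their labels. This is the "merging equal-length subpaths" DP. Set $\ell = s$, or round $s$ up to the nearest $p$-th power $b^p$ with $b=\lceil s^{1/p}\rceil$, padding with $0$-labelled self-loops at the sink if needed. Split any path of length $b^{k}$ at the nodes lying at positions that are multiples of $b^{k-1}$. This gives the identity
\[
P_{b^k}(u,v)=\bigoplus_{w_1,\dots,w_{b-1}} P_{b^{k-1}}(u,w_1)\otimes P_{b^{k-1}}(w_1,w_2)\otimes\cdots\otimes P_{b^{k-1}}(w_{b-1},v).
\]
This is exactly one $\bigoplus\bigotimes$ pair of layers, with $\oplus$ fan-in at most $s^{b-1}$ and $\otimes$ fan-in $b$. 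The identity uses only distributivity and associativity, so it is valid in any commutative semiring, in particular over $\mathsf{R}$ and $\mathsf{R}^{+}$, with no use of inverses.

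Unrolling $p$ levels without sharing yields an alternating $\bigoplus\bigotimes\cdots\bigoplus\bigotimes$ formula of depth $2p$. Its leaves are the $P_1(u,w)$, which are edge labels (a single variable or constant, or a $\oplus$ of parallel edge labels that can be absorbed into the bottom $\oplus$ layer after distributing). The size obeys $F_k \le s^{b}\cdot b\cdot F_{k-1}$, so $F_p \le (b\,s^{b})^{p}=s^{\mathcal{O}(p\,s^{1/p})}$. Applying Theorem~\ref{Low depth circuits to low width ABPs} then gives a width-$(2p+1)$ ABP of size $\mathcal{O}(p\cdot s^{\mathcal{O}(p s^{1/p})})=s^{\mathcal{O}(p s^{1/p})}$.

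The main obstacle is bookkeeping: producing a formula that is \emph{strictly} alternating with exactly depth $2p$, as Theorem~\ref{Low depth circuits to low width ABPs} requires. There are two sources of trouble. The leaves may be multi-edge $\oplus$'s, which I fold into the adjacent $\oplus$ gate by distributivity, since the blow-up is polynomial per level. Missing edges are handled with the constant $\infty$, so the bottom is uniform. For the "in particular" clause, take $p=\log s$. Then $s^{1/p}=\mathcal{O}(1)$, the width is $\mathcal{O}(\log s)$, and the size is $s^{\mathcal{O}(\log s)}$, which is quasi-polynomial.
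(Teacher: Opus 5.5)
Your proposal is correct and takes essentially the paper's route: it uses the equal-length subpath-splitting DP with $\approx s^{1/p}$ pieces per level to get a depth-$2p$ alternating $\bigoplus\bigotimes\cdots\bigoplus\bigotimes$ formula of size $s^{\mathcal{O}(p\cdot s^{1/p})}$, and then applies Theorem~\ref{Low depth circuits to low width ABPs}. Your exact-length walk formulation (iterated matrix powering) is a slightly cleaner way to avoid the paths-versus-walks issue that the paper handles via acyclicity, and it is what makes your self-loop padding legitimate; one small correction is that distributing multi-edge leaves into the bottom $\oplus$ layer costs up to $s^{\mathcal{O}(s^{1/p})}$ rather than a polynomial, which still fits within the claimed bound.
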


For the case of fields, we remark that the following width-reduction for ABPs can be inferred from known results: Consider an ABP of size $s=n^{\mathcal{O}(1)}$ computing a polynomial of degree $d=n^{\mathcal{O}(1)}$ over a field. Viewing this ABP as a skew-circuit and applying VSBR's circuit depth reduction \cite{valiant1983fast} gives a circuit of size $n^{\mathcal{O}(1)}$ and depth $\mathcal{O}((\log n)^2)$. Then, duplicating gates gives a formula of size $n^{\mathcal{O}((\log n)^2)}$. Next, applying Brent's formula depth reduction \cite{brent1974parallel} gives a formula of size $n^{\mathcal{O}((\log n)^2)}$ and depth $\mathcal{O}((\log n)^3)$. Then, applying Ben-Or and Cleve's result \cite{cleve1988computing} gives a width-3 ABP of size $2^{\mathcal{O} ((\log n)^3)}$. Our width-reduction over min-plus semirings (i.e., Corollary \ref{ABPwidthreductioncorollary}) is weaker because by bearing a quasi-polynomial size-blow up, it shrinks the width to $\mathcal{O}(\log n)$, not a constant. Finding/ruling-out an analogue of Ben-or and Cleve's result over min-plus semirings is our key open question; see Section \ref{sec: conclusion}.

In Section \ref{bivariate construction section}, we show that despite their non-universality, 2-register incremental $(\min, +)$ DP algorithms can at least compute all $\infty$-$0$ bivariate polynomials  (i.e., every monomial has coefficient $\infty$ or $0$) over $\mathsf{R}^{+}$ efficiently. The proof of this result crucially relies on absorption properties of semirings.

\begin{restatable}[Width-2 ABPs for $\infty$-$0$ Bivariate polynomials over $\mathsf{R}^{+}$]{theorem}{bivariateconstructiontheorem}
\label{uniform-coefficient bivariates over N}
$~$\\Any $\infty$-$0$ bivariate polynomial $f$ over $\mathsf{R}^{+}$ is computable by a size $\mathcal{O}(\operatorname{degree}(f))$ width-2 ABP.
\end{restatable}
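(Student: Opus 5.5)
The plan is to use absorption over $\mathsf{R}^{+}$ to reduce $f$ to a "staircase" of monomials, and then compute the staircase with two registers. One register accumulates a growing power of $x$; the other accumulates the running minimum, padded by powers of $y$.

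\textbf{Step 1 (reduction to a staircase).} Write $f=\bigoplus_{(a,b)\in S} x^{\otimes a}\otimes y^{\otimes b}$ for a finite set $S\subseteq\mathbb{N}^2$; this uses that all coefficients are $0$, since $\infty$-coefficient monomials simply do not appear. Over $\mathsf{R}^{+}$ all inputs are nonnegative. Hence if $(a,b)\le(a',b')$ componentwise, then $x^{\otimes a}\otimes y^{\otimes b}\le x^{\otimes a'}\otimes y^{\otimes b'}$ pointwise, and the larger monomial is absorbed. So $f$ equals, as a function, the $\oplus$ over the Pareto-minimal elements of $S$. These can be listed as $(a_1,b_1),\dots,(a_k,b_k)$ with $a_1<a_2<\dots<a_k$ and $b_1>b_2>\dots>b_k$. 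In particular $a_k+b_1\le 2\operatorname{degree}(f)$. This is the only place nonnegativity is used, and it is exactly why the statement is restricted to $\mathsf{R}^{+}$ and to $\infty$-$0$ coefficients.

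\textbf{Step 2 (the invariant).} Define
\[ h_i=\bigoplus_{j\le i} x^{\otimes a_j}\otimes y^{\otimes (b_j-b_i)} . \]
These exponents are nonnegative because the $b$'s are decreasing. Then $h_1=x^{\otimes a_1}$, and
\[ h_{i+1}=\bigl(h_i\otimes y^{\otimes(b_i-b_{i+1})}\bigr)\oplus x^{\otimes a_{i+1}} . \]
Finally $f=h_k\otimes y^{\otimes b_k}$. All of this is exact algebra: no further absorption is needed once the staircase is fixed.

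\textbf{Step 3 (the width-$2$ ABP).} Every layer has two nodes, $R_1$ holding the current $h_i$ and $R_2$ holding $x^{\otimes a_i}$. The ABP uses three kinds of layer transition:
\begin{itemize}
\item an \emph{$x$-step}, with edges $R_1\to R_1$ labelled $0$ and $R_2\to R_2$ labelled $x$;
\item a \emph{$y$-step}, with edges $R_1\to R_1$ labelled $y$ and $R_2\to R_2$ labelled $0$;
\item a \emph{merge step}, with edges $R_1\to R_1$ and $R_2\to R_1$ labelled $0$, together with $R_2\to R_2$ labelled $0$.
\end{itemize}
The construction proceeds as follows:
\begin{enumerate}
\item Start from the source with $a_1$ edges labelled $x$, branching at the end into both $R_1$ and $R_2$. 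Equivalently, initialise $R_1=R_2=x^{\otimes a_1}$.
\item For each $i=1,\dots,k-1$, perform $b_i-b_{i+1}$ $y$-steps, then $a_{i+1}-a_i$ $x$-steps, then one merge step.
\item At the end, perform $b_k$ $y$-steps and take $R_1$ as the sink.
\end{enumerate}
An induction on $i$ using the recurrence of Step 2 shows that after the $i$-th merge we have $R_1=h_{i+1}$ and $R_2=x^{\otimes a_{i+1}}$. The interleaving of $x$- and $y$-steps does not matter, because each step touches only one register.

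The number of layers is $a_k+(b_1-b_k)+b_k+k=O(\operatorname{degree}(f))$, since $k\le a_k+1$. Each layer has $O(1)$ edges, so the total size is $O(\operatorname{degree}(f))$.

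The main point to get right is Step 1, the reduction to the Pareto frontier. Without it, a general monomial set gives no monotone order in which $y$-exponents can be "paid" incrementally with a single register. The rest is a Horner-like bookkeeping check that the merge edges never introduce monomials outside $\{x^{\otimes a_j}\otimes y^{\otimes b_j}\}$, which the invariant in Step 2 guarantees.
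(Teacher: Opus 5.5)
Your proof is correct and follows essentially the same route as the paper: absorption over $\mathsf{R}^{+}$ reduces $f$ to a staircase of monomials (strictly monotone $x$-exponents with oppositely monotone $y$-exponents), and then a two-track width-2 ABP realizes exactly those monomials, with one track carrying the $x$-edges, the other the $y$-edges, and one-way bridges so that every source-to-sink path switches tracks exactly once. Your register/Horner invariant is the mirror image of the paper's bridge picture (there, paths collect $y$'s before the bridge and $x$'s after it); its one small advantage is that it verifies the general-$k$ case explicitly, whereas the paper only illustrates three monomials.
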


Over fields, Bringmann, Ikenmeyer, and Zuiddam \cite{bringmann2018algebraic} showed that exponential hyper-cube sum over very weak models such as width-$2$ ABPs or products of linear forms are the same as $\mathsf{VNP}$. In sections \ref{VNP via hypercube sum over width-2 ABPs section} and \ref{hypercube sum over general width-1 ABPs section}, we show that these results hold over semirings as well.




\subsection{Proof Outlines}

In Section \ref{non-universality N section}, we show that width-2 ABPs cannot compute $(x_1\otimes  y_1)\oplus (x_2\otimes y_2)\oplus (x_3\otimes y_3)$ over $\mathsf{R}$ and $\mathsf{R}^{+}$. A high level description of this proof is as follows: First, we show that any ABP computing the above polynomial must have three paths, one for each monomial, such that each of these three paths contains both variables of the corresponding monomial exactly once each and no variable from the other two monomials. Then, we traverse the ABP from left-to-right and mark a layer based on the first variable occurrence amongst the edges of these three paths. Next, we traverse further to mark another layer based on the first variable occurrence amongst the edges of the remaining two paths. At this marked layer, we analyze the possible orientations of edges from the three paths entering into and exiting from it. For each possibility, we show a way to patch sub-paths from two of the three paths to form a new path; further, we analyze the possible weights of this new path and for each possibility, we show that the weight of this new path takes a value strictly lesser than the value taken by $(x_1\otimes y_1)\oplus (x_2\otimes y_2)\oplus (x_3\otimes y_3)$ for some choice of variable substitution, which is a contradiction. We make this argument precise in the proof of Theorem \ref{non-universal width-2 semiring N}.

In Section \ref{low depth circuits to low width ABPs section}, we show how to simulate any size $s$ depth $2p$ alternating $\bigoplus\bigotimes\ldots \bigoplus\bigotimes$ formula using a width $(2p+1)$ ABP of size $\mathcal{O}(p\cdot s)$ over $\mathsf{R}$ and $\mathsf{R}^{+}$. A high-level description of this proof is as follows: We associate a carefully chosen format matrix with each of the $2p$ levels of the formula. Moving bottom-to-top, we describe how to obtain width $(2p+1)$ ABPs that computes the outputs of nodes in any level $i$ (in format associated with level $i$) using the width $(2p+1)$ ABPs already built to compute outputs of nodes in level $i-1$ (in format associated with level $i-1$). We make this construction precise in the proof of Theorem \ref{Low depth circuits to low width ABPs}.

In Section \ref{bivariate construction section}, we show that despite their non-universality, width-2 ABPs can at least compute all $\infty$-$0$ (i.e., all monomials have coefficients $\infty$ or $0$) bivariate polynomials over $\mathsf{R}^{+}$. A high level description of this proof is as follows: If a monomial has both a higher $x$ power and a higher $y$ power than another monomial, then the former monomial always (i.e., for any possible variable subsitution) takes larger values than (and so, can be safely absorbed into) the latter monomial. After such absorptions, the monomials that remain are such that the higher their $x$ powers, the lower their $y$ powers. We construct a width-2 ABP that has all $x$'s along the top level and all $y$'s along the bottom level. The two levels are joined by suitably placed bridges. Any source-to-sink path first covers up some region of the bottom level, takes a bridge to jump to the top level, and then covers up some region of the top level. The earlier it takes the bridge jump, the lower its weight's $y$ power and higher its weight's $x$ power. Similarly, the later it takes the bridge jump, the higher its weight's $y$ power and lower its weight's $x$ power. This enables us to have a path for each monomial. We make this construction idea precise in the proof of Theorem \ref{uniform-coefficient bivariates over N}.

 
\section{Preliminaries}

Let $S$ be a set equipped with \emph{addition} (denoted as $+$) and \emph{multiplication} (denoted as $\cdot$) operations. Then, $(S,+, \cdot)$ is called a \emph{semiring} if i) $(S,+)$ is a commutative monoid, ii) $(S,\cdot)$ is a monoid, iii) multiplying (from either side) the additive identity with any element of $S$  gives back the additive identity, and iv) multiplication (from either side) distributes over addition. Based on the definition of \emph{monoid}, the first and second conditions together can be alternatively put as follows: i) $S$ is closed under $+$ and $\cdot$, ii) both $+$ and $\cdot$ are associative, iii) $+$ is commutative, and iv) there exist additive and multiplicative identities. Semirings generalize \emph{rings} (which, in turn, generalize \emph{fields} by allowing elements to not have multiplicative inverses) by allowing elements to not have additive inverses.\\[-2mm]

\noindent{\bf Min-plus semirings $\mathsf{R}$ and $\mathsf{R}^{+}$:}
We use $\mathsf{R}$ and $\mathsf{R}^{+}$ to denote the semirings $(\mathbb{R}\cup\{\infty\},\oplus,\otimes)$ and $(\mathbb{R}_{\geq 0}\cup\{\infty\},\oplus,\otimes)$ respectively, where $\mathbb{R}$ is the set of all real numbers, $\mathbb{R}_{\geq 0}$ is the set of all non-negative real numbers, $\oplus$ denotes the minimum operation (which serves as the semirings' addition operation), and $\otimes$ denotes usual addition (which serves as the semirings' multiplication operation). The multiplicative and additive identities of $\mathsf{R}$ and $\mathsf{R}^{+}$ are $0$ and $\infty$ respectively. Note that $\mathsf{R}$ and $\mathsf{R}^{+}$ are \emph{idempotent}, i.e., $a\oplus a = a$ for all $a\in \mathsf{R}$ (or $\mathsf{R}^{+}$). For any variable $x$ and any integer $d\geq 0$, we use $x^{\otimes d}$ to denote $\underbrace{x\otimes x\otimes \ldots \otimes x}_{d \mbox{ \small{times}}}$.\\[-2mm]

\noindent{\bf Polynomials over $\mathsf{R}$ and $\mathsf{R}^{+}$:} An $n$-variate polynomial $f(x_1, \ldots, x_n)$ over min-plus semiring $S$ ($=\mathsf{R}$ or $\mathsf{R}^{+}$) is of the form $\underset{a_1, \ldots, a_n\in \mathbb{N}}{\bigoplus}c_{a_1, \ldots, a_n}\otimes x_1^{\otimes a_1}\otimes x_2^{\otimes a_2}\otimes \ldots \otimes x_n^{\otimes a_n}$ (where the coefficients $c_{a_1, \ldots, a_n}$'s are elements of $S$, and only finitely many of them are $\neq \infty$), which naturally defines a function from $S^n$ to $S$ as follows: For every $(s_1, \ldots, s_n)\in S^{n}$, the corresponding function maps $(s_1, \ldots, s_n)$ to $f(x_1=s_1, \ldots, x_n=s_n)$, i.e., polynomial $f$'s value upon substituting variables $x_1, \ldots, x_n$ as $s_1, \ldots, s_n$ respectively. For any two polynomials $f$ and $g$ over $S$, if the value taken by $g$ is always (i.e., for each possible substitution of the variables from $S$) at least the value taken by $f$, then $f\oplus g$ defines the same function as $f$. So, $g$ can be \emph{absorbed} into $f$, i.e., we shall treat $f$ and $f\oplus g$ as same. For example, $x\oplus (x\otimes y) = x$ over $\mathsf{R}^{+}$ (but not $\mathsf{R}$), and $x^{\otimes 2} \oplus y^{\otimes 2} \oplus (x\otimes y) = x^{\otimes 2} \oplus y^{\otimes 2}$ over both $\mathsf{R}$ and $\mathsf{R}^{+}$. This is because $\min\{x, x+y\} = x$ for all $x, y\in \mathsf{R}^{+}$, and $\min\{2x, 2y, x+y\} = \min\{2x, 2y\}$ for all $x, y\in \mathsf{R}$ (or $\mathsf{R}^{+}$). Also, if two polynomials $f$ and $g$ over $S$ define the same function from $S^{n}$ to $S$, we shall treat them as equivalent to each other with respect to computation; that is, if a computational model (e.g., formula, circuit, ABP) computes $g$, then we shall also say that it computes $f$, and vice versa. 

It may be worth noting that the above examples can be generalized to get the following identities: i) For any polynomials $f$ and $g$ over $\mathsf{R}^{+}$, we have $f=f\oplus (f\otimes g)$, and ii) for any polynomials $f_1, \ldots, f_k$ over $\mathsf{R}$ (or $\mathsf{R}^{+}$) and any integer $\ell\geq 0$, we have $(f_1\oplus \ldots \oplus f_k)^{\otimes \ell} = f_1^{\otimes \ell}\oplus \ldots \oplus f_k^{\otimes \ell}$. In the second identity, any other term $f_1^{\otimes t_1}\otimes \ldots\otimes f_k^{\otimes t_k}$ (where $t_1+\ldots+t_k=\ell$) in multinomial expansion of LHS gets absorbed into $f_1^{\otimes \ell}\oplus \ldots \oplus f_k^{\otimes \ell}$. So, for instance, degree $\ell$ complete homogeneous symmetric polynomial, degree $\ell$ power sum symmetric polynomial and $\ell^{th}$ power of degree 1 elementary/complete homogeneous symmetric polynomial over $\mathsf{R}$ (or $\mathsf{R}^{+}$) define the same function.\\[-2mm]

\noindent{\bf $p$-families and $p$-projections:}
A \emph{$p$-family} is a sequence $(f_n)_{n\geq 1}$ of polynomials whose number of variables and degrees are \emph{polynomially-bounded}, i.e., there exist polynomial functions $a(n)$ and $ b(n)$ such that for each $n\geq 1$, $f_n$ has at most $a(n)$ variables and $\operatorname{degree}(f_n)\leq b(n)$. A $p$-family $(f_n)_{n\geq 1}$ is called a \emph{$p$-projection} of another $p$-family $(g_n)_{n\geq 1}$ if there exists a polynomial function $c(n)$ such that for each $n\geq 1$, $f_n$ can be obtained from $g_{c(n)}$ by substituting each of its variables by either a variable or a constant.\\[-2mm]

\noindent{\bf Algebraic Branching Programs - The weak, weakest and general models:}
An \emph{algebraic branching program} (ABP) is a layered directed acyclic graph with designated \emph{source} and \emph{sink} nodes. Every edge (directed from some layer to its next layer) is labelled with either a variable or a constant from the base semiring. The  ABP is said to compute the sum of weights of its source-to-sink paths, where \emph{weight} of a path is the product of its edges' labels. The maximum number of nodes per layer in an ABP is called its \emph{width}. A width-$k$ ABP can be viewed as a sequence of $k\times k$ matrices, and vice versa. To do so, set the entry at $i^{th}$ row and $j^{th}$ column of $\ell^{th}$ matrix as the label of the edge directed from $i^{th}$ vertex of layer $\ell$ to $j^{th}$ vertex of layer $\ell+1$. Then, $(1,1)$-entry of the product of these $k\times k$ matrices is same as the polynomial computed by the width-$k$ ABP.\\[-3mm]

The ABPs, as defined above, are often called \emph{weakest} ABPs. Two stronger definitions, called \emph{weak} ABPs and \emph{general} ABPs, allow for edge labels to be linear forms in one variable and linear forms respectively. It may be worth noting that as described in \cite{univariatesemiring}, any univariate polynomial over $\mathsf{R}$ can be written as a product of linear factors based on its \emph{tropical roots/break points}, i.e., points at which slope of the univariate polynomial (viewed as a function) changes; equivalently, these are the points where the minimum is attained by at least two monomials. See Figure \ref{tropical roots figure} for an example. So, any univariate polynomial over $\mathsf{R}$ can be computed using a weak/general width-$1$ ABP. Also, any univariate polynomial can be computed by a weakest width-2 ABP (see Figure \ref{width 2 univariate figure}). In this paper, we mostly work with weakest ABPs (except in Section \ref{hypercube sum over general width-1 ABPs section}, which considers general ABPs).

\begin{appsection}{Example illustrating Factoring of Univariate Tropical polynomials}{appsec:example-tropical-factoring}
\begin{figure}[ht!]
\centering
\includegraphics[scale=0.7]{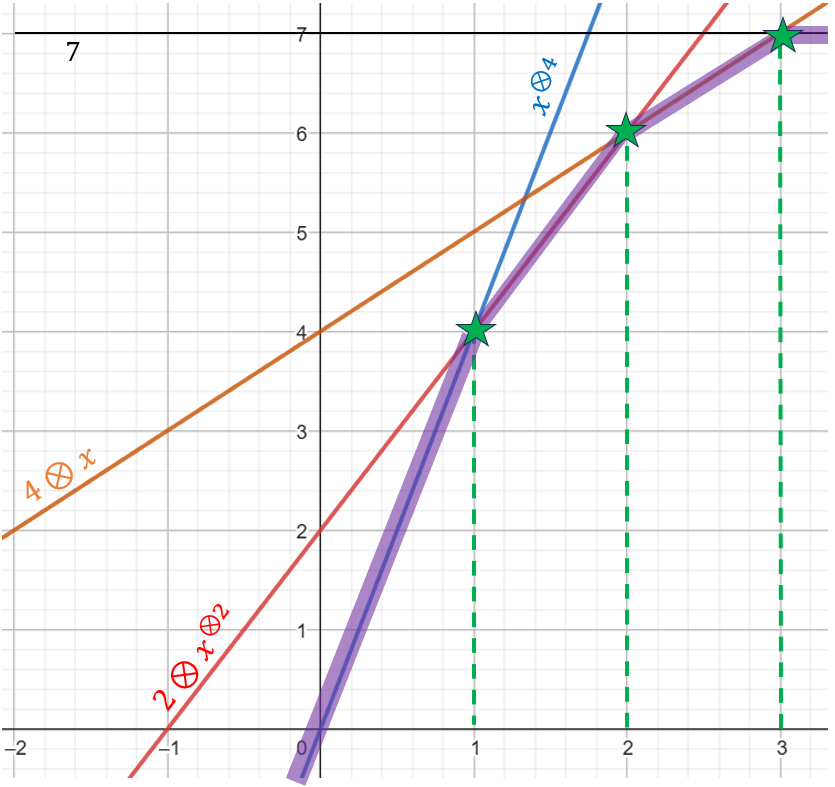}
\caption{: Consider the univariate polynomial $p(x):=x^{\otimes 4}\oplus (2\otimes x^{\otimes 2})\oplus (4\otimes x)\oplus 7$ over $\mathsf{R}$. The plots of the monomials $x^{\otimes 4}$, $2\otimes x^{\otimes 2}$, $4\otimes x$ and $7$ are in blue, red, orange and black respectively. The plot of $p(x)$ is their minimum (in purple). The tropical roots/break points are at $x=1$, $2$ and $3$, as marked by green stars. Observe that  $p(x)$ can be factored as $(x\oplus 1)^{\otimes 2}\otimes (x\oplus 2)\otimes (x\oplus 3)$.}
\label{tropical roots figure}
\end{figure} 
\end{appsection}

\begin{appsection}
{Weakest width-2 ABPs computing Univariate polynomials}{appsec:width2-ABP-univariate-poly}
\begin{figure}[ht!]
\centering
\includegraphics[scale=0.75]{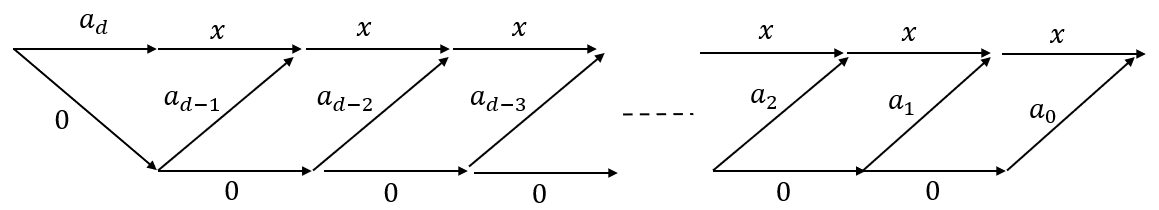}
\caption{: This figures shows a weakest width-2 ABP of size $\mathcal{O}(d)$ that computes a degree $d$ univariate polynomial $p(x) = (a_d\otimes x^{\otimes d})\oplus (a_{d-1}\otimes x^{\otimes (d-1)})\oplus \ldots \oplus (a_2\otimes x^{\otimes 2}) \oplus (a_1 \otimes x)\oplus a_0$.}
\label{width 2 univariate figure}
\end{figure}
\end{appsection}

\noindent{\bf Permanent and Hamiltonian cycle families over $\mathsf{R}$ and $\mathsf{R}^{+}$:} The \emph{permanent family} is $(perm_n)_{n\geq 1}$, where $perm_n$ denotes the permanent of an $n\times n$ matrix whose entries are $n^2$ variables $x_{i,j}\vert_{1\leq i, j\leq n}$'s. That is, 
$perm_n := \underset{\sigma \in S_n}{\bigoplus} x_{1, \sigma(1)}\otimes x_{2, \sigma(2)}\otimes \ldots \otimes x_{n, \sigma(n)}$, 
where $S_n$ denotes the set of all permutations of $[n]$. It corresponds to minimum weight perfect matching of complete bipartite graph $K_{n, n}$ and minimum weight cycle cover of $n$-vertex complete directed graph with self-loops. The \emph{Hamiltonian cycle family} is $(HC)_{n\geq 1}$, where $HC_n:= \underset{\sigma\in C_n}{\bigoplus}x_{1, \sigma_1}\otimes x_{2, \sigma(2)}\otimes \ldots \otimes x_{n, \sigma(n)}$ and $C_n\subseteq S_n$ denotes the set of all cyclic permutations of $[n]$. It corresponds to minimum weight Hamiltonian cycle of an $n$-vertex complete directed graph.
\section{$\mathsf{VP}\stackrel{?}{=}\mathsf{VNP}$ and its Analogue over $\mathsf{R}$ and $\mathsf{R}^{+}$}
\label{VNP analogue section}

Recall that over any field $\mathbb{F}$, the class $\mathsf{VNP}_{\mathbb{F}}$ consists of all $p$-families expressible as a hypercube sum over a family in $\mathsf{VP}_{\mathbb{F}}$ using polynomially-many hypercube variables which take Boolean values $0$ and $1$. Note that the additive and multiplicative identities of $\mathsf{R}$ and $\mathsf{R}^{+}$ are $\infty$ and $0$ respectively. So, an obvious attempt to define $\mathsf{VNP}$ over $\mathsf{R}$ and $\mathsf{R}^{+}$ would be to keep the same definition except that the hypercube variables take Boolean values $\infty$ and $0$ (instead of $0$ and $1$). That is, $\mathsf{VNP}_{\mathsf{R} (\text{or } \mathsf{R}^{+})}$ would consist of all $p$-families $(f_n)_{n\geq 1}$ for which there exists a $p$-family $(g_n)_{n\geq 1}\in \mathsf{VP}_{\mathsf{R} (\text{or } \mathsf{R}^{+})}$ and polynomials $p(n)$
\& $q(n)$ such that $f_n(X) = \underset{y_1, \ldots, y_{q(n)}\in \{\infty, 0\}}{\bigoplus}g_{p(n)}(X, y_1, \ldots, y_{q(n)})$, where $y_1, \ldots, y_{q(n)}$ are hypercube variables. Now, view $g_{p(n)}$ as a polynomial in $y_1, \ldots, y_{q(n)}$ whose coefficients are polynomials in $X$. That is, $g_{p(n)}$ is of the form $\underset{a_1, \ldots, a_{q(n)}\in \mathbb{N}}{\bigoplus}h_{a_1, \ldots, a_{q(n)}}(X)\otimes y_1^{\otimes a_1}\otimes \ldots \otimes y_{q(n)}^{\otimes a_{q(n)}}$ for some polynomials $h_{a_1, \ldots, a_{q(n)}}$'s in $X$. Plugging this $g_{p(n)}$'s expression in $f_n$'s expression,  
\begin{equation*}
f_n(X) = \underset{y_1, \ldots, y_{q(n)}\in \{\infty, 0\}}{\bigoplus}\Big(\underset{a_1, \ldots, a_{q(n)}\in \mathbb{N}}{\bigoplus}h_{a_1, \ldots, a_{q(n)}}(X)\otimes y_1^{\otimes a_1}\otimes \ldots \otimes y_{q(n)}^{\otimes a_{q(n)}}\Big).
\end{equation*}
Changing the order of the two summations (i.e., actually, two minimums), we get  
\begin{equation*}
\begin{split}
f_n(X) & = \underset{a_1, \ldots, a_{q(n)}\in \mathbb{N}}{\bigoplus}h_{a_1, \ldots, a_{q(n)}}(X)\otimes \underbrace{\Big(\underset{y_1, \ldots, y_{q(n)}\in \{\infty, 0\}}{\bigoplus} y_1^{\otimes a_1}\otimes \ldots \otimes y_{q(n)}^{\otimes a_{q(n)}}\Big)}_{=0}\\
& = \underset{a_1, \ldots, a_{q(n)}\in \mathbb{N}}{\bigoplus}h_{a_1, \ldots, a_{q(n)}}(X) ~~= g_{p(n)}(X, 0, \ldots, 0),
\end{split}
\end{equation*}
which has polynomial-sized circuit; so, this definition of $\mathsf{VNP}_{\mathsf{R} (\text{or } \mathsf{R}^{+})}$ makes it coincide with $\mathsf{VP}_{\mathsf{R} (\text{or } \mathsf{R}^{+})}$.

\subsection{Allowing Complements of Hypercube variables}
Over any field $\mathbb{F}$, suppose that we modify the definition of $\mathsf{VNP}_{\mathbb{F}}$ to say that the hypercube sum's summand is allowed to be a polynomial in the complements of hypercube variables (apart from the original variables and hypercube variables), where the complement $\overline{y}$ of any hypercube variable $y$ takes values $0$ and $1$ when $y$ takes values $1$ and $0$ respectively. However, since $\overline{y}=1-y$ for any hypercube variable $y$, this modified definition of $\mathsf{VNP}_{\mathbb{F}}$ is same as its original definition. Now, let us mimic this modification over $\mathsf{R}$ and $\mathsf{R}^{+}$ as follows: Define $\mathsf{VNP}_{\mathsf{R} (\text{or } \mathsf{R}^{+})}$ to consist of those $p$-families $(f_n)_{n\geq 1}$ for which there exists a $p$-family $(g_n)_{n\geq 1}\in \mathsf{VP}_{\mathsf{R} (\text{or } \mathsf{R}^{+})}$ and polynomials $p(n)$ \& $q(n)$ such that $f_n(X) = \underset{y_1, \ldots, y_{q(n)}\in \{\infty, 0\}}{\bigoplus}g_{p(n)}(X, y_1, \ldots, y_{q(n)}, \overline{y_1}, \ldots, \overline{y_{q(n)}})$, where the comple\-ment $\overline{y}$ of any hypercube variable $y$ takes values $\infty$ and $0$ when $y$ takes values $0$ and $\infty$ respectively. Unlike fields (where $\overline{y}=1-y$), $\overline{y}$ cannot be realized as a polynomial expression in $y$ over $\mathsf{R}$ and $\mathsf{R}^{+}$. So, it is conceivable that this modified definition may strictly strengthen the obvious definition of $\mathsf{VNP}_{\mathsf{R} (\text{or } \mathsf{R}^{+})}$ attempted above (i.e., separate it from $\mathsf{VP}_{\mathsf{R} (\text{or } \mathsf{R}^{+})}$). To show that this is indeed true, we prove that the permanent family (which needs $2^{\Omega(n\log n)}$ sized circuits over $\mathsf{R}$ and $\mathsf{R}^{+}$ \cite{jerrum1982some} and so, $\not\in \mathsf{VP}_{\mathsf{R} (\text{or } \mathsf{R}^{+})}$) satisfies this definition as follows: Recall that $perm_{n} := \underset{\sigma\in S_{n}}{\bigoplus} \underset{1\leq i\leq n}{\bigotimes}x_{i, \sigma(i)}$. Note that every map $\sigma$ from $[n]$ to $[n]$ can be specified by a $\infty$-$0$  matrix of size $n\times \log n$, where for every $1\leq i\leq n$, its $i^{th}$ row indicates binary encoding of $i$'s image under $\sigma$. Also, all permutations (i.e., bijective maps) correspond to those such matrices wherein all rows are distinct. So, we write
\begin{equation*}
\begin{split}
perm_{n} & = \underset{Y\in \{\infty, 0\}^{n\times \log n}}{\bigoplus}\big[\substack{\text{All rows of $Y$}\\ \text{are distinct}}\big]\otimes \bigotimes_{i=1}^{n}\Bigg(\bigoplus_{t=1}^{n}x_{i,t}\otimes \Big[\substack{\big(Y_{i,1}, \ldots, Y_{i, \log n}\big) \text{ is} \\ \text{the binary encoding of } t}\Big]\Bigg)\\
& = \underset{Y\in \{\infty, 0\}^{n\times \log n}}{\bigoplus}\Bigg(\underset{1\leq u<v\leq n}{\bigotimes}\bigoplus_{w=1}^{\log n}[Y_{u,w}\neq Y_{v,w}]\Bigg)\otimes \bigotimes_{i=1}^{n}\Bigg(\bigoplus_{t=1}^{n}x_{i,t}\otimes \bigotimes_{w=1}^{\log n}[Y_{i,w}=t_w]\Bigg),
\end{split}
\end{equation*}
where for every $1\leq t\leq n$, $(t_1, \ldots, t_{\log n})$ denotes the binary encoding of $t$, and $[\ldots]$ denotes the indicator function (i.e., it takes values $0$ and $\infty$ when the statement enclosed within $[\ldots]$ is true and false respectively). Thus, we get the following expression for $perm_{n}$:
\begin{equation*}
\hspace{-1.1 cm}
\underset{Y\in \{\infty, 0\}^{n\times \log n}}{\bigoplus}\underbrace{\Bigg(\underset{1\leq u<v\leq n}{\bigotimes}\bigoplus_{w=1}^{\log n}\big(\overline{Y_{u,w}}\otimes  Y_{v,w}\oplus Y_{u,w} \otimes  \overline{Y_{v,w}}\big)\Bigg)\otimes  \bigotimes_{i=1}^{n}\Bigg(\bigoplus_{t=1}^{n}x_{i,t}\otimes \bigotimes_{w=1}^{\log n}\big(t_w\otimes  Y_{i,w}\oplus \overline{t_w}\otimes  \overline{Y_{i,w}}\big)\Bigg)}_{\text{This has $n^{\mathcal{O}(1)}$ sized formula, as desired.}}
\end{equation*}

\subsection{Spectrum of number of Complementable Hypercube variables - A Dichotomy Theorem}
While our first attempt to define $\mathsf{VNP}_{\mathsf{R} (\text{or } \mathsf{R}^{+})}$ allowed no hypercube variable to be complemented, the modified definition above allowed all the polynomially-many hypercube variables to be compl\-emented. Between these two extremes, lies a spectrum of definitions based on how many hypercube variables are allowed to be complemented. For any function $r(n)$, let $\mathsf{VNP}_{\mathsf{R} (\text{or } \mathsf{R}^{+})}^{[r(n)]}$ denote the class corresponding to the definition that allows $r(n)$ many hypercube variables to be complemented. We analyze how the relationship between $\mathsf{VP}_{\mathsf{R} (\text{or } \mathsf{R}^{+})}$ and $\mathsf{VNP}_{\mathsf{R} (\text{or } \mathsf{R}^{+})}^{[r(n)]}$ changes as $r(n)$ is varied. In particular, we show the following dichotomy theorem.

\VPVNPthm*


First, we show that $\mathsf{VP}_{\mathsf{R} (\text{or } \mathsf{R}^{+})} \neq \mathsf{VNP}_{\mathsf{R} (\text{or } \mathsf{R}^{+})}^{[r(n)]}$ when $r(n)=\omega(\log n)$ as follows: Consider the family $(f_n)_{n\geq 1}$ defined as $f_n:= perm_{\beta(n)}$, where $\beta(n)$ denotes the function for which $\beta(n) \log(\beta(n)) = r(n)$. As described earlier, $perm_{\beta(n)}$ can be written as a hypercube sum using $\beta(n) \log(\beta(n))$ hypercube variables and their complements. Also, $perm_{\beta(n)}$ needs $2^{\Omega\big(\beta(n)\log(\beta(n))\big)} = 2^{\omega(\log n)} = n^{\omega(1)}$ sized circuits over $\mathsf{R}$ and $\mathsf{R}^{+}$ \cite{jerrum1982some}. Therefore, $(f_n)_{n\geq 1}\in \mathsf{VNP}_{\mathsf{R} (\text{or } \mathsf{R}^{+})}^{[r(n)]} \setminus \mathsf{VP}_{\mathsf{R} (\text{or } \mathsf{R}^{+})}$, as desired. 

Next, we show that $\mathsf{VP}_{\mathsf{R} (\text{or } \mathsf{R}^{+})} = \mathsf{VNP}_{\mathsf{R} (\text{or } \mathsf{R}^{+})}^{[r(n)]}$ when $r(n)=\mathcal{O}(\log n)$ as follows: Consider any family $(f_n)_{n\geq 1}\in \mathsf{VNP}_{\mathsf{R} (\text{or } \mathsf{R}^{+})}^{[r(n)]}$. Then, there exists a family $(g_n)_{n\geq 1}\in \mathsf{VP}_{\mathsf{R} (\text{or } \mathsf{R}^{+})}$ and polynomials $p(n)$ \& $q(n)$ such that $f_n(X) = \underset{y_1, \ldots, y_{q(n)}\in \{\infty, 0\}}{\bigoplus}g_{p(n)}(X, y_1, \ldots, y_{q(n)}, \overline{y_1}, \ldots, \overline{y_{c \log (n)}})$. Now, view $g_{p(n)}(X, y_1, \ldots, y_{q(n)},$$ \overline{y_1}, \ldots, \overline{y_{c\log(n)}})$ as a polynomial in $y_1, \ldots, y_{q(n)},$ $ \overline{y_1}, \ldots,$ $ \overline{y_{c\log(n)}}$ whose coefficie\-nts are polynomials in $X$. That is, $g_{p(n)}$ is of the following form:
$$\underset{\substack{\underline{a}=(a_1, \ldots, a_{q(n)})\in \mathbb{N}^{q(n)}\\\underline{b}=(b_1, \ldots, b_{c\log(n)})\in \mathbb{N}^{c\log(n)}}}{\bigoplus}h_{\underline{a},\underline{b}}(X)~\otimes \bigotimes_{i=1}^{q(n)}y_i^{\otimes a_i} \otimes \bigotimes_{j=1}^{c\log(n)}(\overline{y_j})^{\otimes b_j},$$
for some polynomials $h_{a_1, \ldots, a_{q(n)}, b_1, \ldots, b_{c\log(n)}}$'s in $X$. Also, any summand wherein some hypercube variable and its complement both have a positive power vanishes. That is, if both $a_i$ and $b_{i}$ are $>0$ for some $1\leq i\leq c \log(n)$, then the summand evaluates to $\infty$. This is because for such a summand, $y_i^{a_i}=\infty$ when $y_i=\infty$, and $(\overline{y_i})^{b_i}=\infty$ when $y_i=0$. So, all such summands can be safely dropped. So, we are left with only those summands wherein for each $1\leq i\leq c\log(n)$, only one of $y_i$ and $\overline{y_i}$ appears. Thus, we can express $g_{p(n)}(X, y_1, \ldots, y_{q(n)},\overline{y_1}, \ldots, \overline{y_{c \log(n)}})$ in the following form:
\begin{equation*}
\hspace{-0.6 cm}\underset{\substack{\underline{\sigma}=(\sigma_1,\ldots, \sigma_{c\log(n)}) \in \{\text{comp},  \text{ not-comp}\}^{c\log(n)}\\\underline{\lambda}=(\lambda_1, \ldots, \lambda_{q(n)})\in \mathbb{N}^{q(n)}}}{\bigoplus}  \tilde{h}_{\substack{\underline{\sigma}, \underline{\lambda}}}(X)~\otimes\bigotimes_{i=1}^{c\log(n)}(\sigma_i(y_i))^{\otimes \lambda_i}\otimes \bigotimes_{j=c\log(n)+1}^{q(n)}y_j^{\otimes \lambda_j},
\end{equation*}
for some polynomials $\tilde{h}$'s in $X$, and where for each $1\leq i \leq c\log(n)$, $\sigma_i(y_i) := y_i$ when $\sigma_i=\text{not-comp}$, and $\sigma_i(y_i) :=\overline{y_i}$ when $\sigma_i=\text{comp}$. Next, plugging the above expression of 
$g_{p(n)}$ in the expression of $f_n$, and then changing the order of the two summations (i.e., actually, two minimums), we get the following expression for $f_n(X)$:
\begin{equation*}
\begin{split}
& \underset{\underline{\sigma}, \underline{\lambda}}{\bigoplus} ~\tilde{h}_{\underline{\sigma}, \underline{\lambda}}(X)\otimes \underbrace{\underset{y_1, \ldots, y_{q(n)}\in \{\infty,0\}}{\bigoplus}~\bigotimes_{i=1}^{c\log(n)}\big(\sigma_i(y_i)\big)^{\otimes {\lambda_i}}\otimes \bigotimes_{j=c\log(n)+1}^{q(n)} y_j^{\otimes \lambda_j}}_{A_{\underline{\sigma}, \underline{\lambda}}.}
\end{split}
\end{equation*}
Note that $A_{\underline{\sigma}, \underline{\lambda}} = 0$ because the summand corresponding to  $y_i=\sigma_i(0)$ for all $1\leq i\leq c\log(n)$ and $y_{j}=0$ for all $c\log(n)+1\leq j\leq q(n)$ is $0$; here, for each $1\leq i\leq c\log(n)$, $\sigma_{i}(0)$ is defined as 0 when $\sigma_i=$ not-comp, and $\infty$ when  $\sigma_i =$ comp. So, we get
\begin{equation}
f_n(X) = \underset{\underline{\sigma}}{\bigoplus}\Bigg(\underset{\underline{\lambda}}{\bigoplus}~\tilde{h}_{\underline{\sigma}, \underline{\lambda}}(X)\Bigg).
\end{equation}
For every $\underline{\sigma}\in \{\text{comp, not-comp}\}^{c\log(n)}$, let $g_{p(n)}\vert_{\Phi_{\underline{\sigma}}}$ denote the polynomial obtained from $g_{p(n)}$ by substituting $y_i$ as $\sigma_i(0)$ for all $1\leq i\leq c\log(n)$ and $y_j$ as $0$ for all $c\log(n)< j\leq q(n)$.
Then, note that
\begin{equation}
\begin{split}
&\underset{\underline{\sigma}}{\bigoplus}~ g_{p(n)}\Bigg\vert_{\Phi_{\underline{\sigma}}} = ~\underset{\underline{\sigma}}{\bigoplus}~\Bigg(\underset{\underline{\sigma'}, \underline{\lambda}}{\bigoplus} ~\tilde{h}_{\underline{\sigma'}, \underline{\lambda}}(X)~\otimes \bigotimes_{i=1}^{c\log(n)} ~\big(\sigma'_i(\sigma_i(0))\big)^{{\otimes \lambda_i}}\Bigg)\\
& = \underbrace{\Bigg(\underset{\underline{\sigma}}{\bigoplus}\Bigg(\underset{\underline{\lambda}}{\bigoplus}~\tilde{h}_{\underline{\sigma}, \underline{\lambda}}(X)\Bigg)\Bigg)}_{\text{\normalsize corresponding to } \underline{\sigma}'=\underline{\sigma}} \oplus \Bigg(\substack{\text{\normalsize Some summands corresponding to } \underline{\sigma}'\neq \underline{\sigma}\\ \text{ \normalsize may survive. However, any such summand}\\ \text{\normalsize must have already appeared in the previous}\\ \text{\normalsize term and so, it can be safely ignored.}}\Bigg)\\
& = \underset{\underline{\sigma}}{\bigoplus}\Bigg(\underset{\underline{\lambda}}{\bigoplus}~\tilde{h}_{\underline{\sigma}, \underline{\lambda}}(X) \Bigg).
\end{split}
\end{equation}
Therefore, using (1) and (2), we get $f_n(X)=\underset{\underline{\sigma}}{\bigoplus} ~g_{p(n)}\big\vert_{\Phi_{\underline{\sigma}}}$. So, as $g_{p(n)}\vert_{\Phi_{\underline{\sigma}}}$ has an $n^{\mathcal{O}(1)}$ sized circuit for each $\underline{\sigma}\in \{\text{comp, not-comp}\}^{c\log(n)}$, it follows that $f_n(X)$ can be computed by a circuit of size $2^{c\log(n)}\cdot n^{\mathcal{O}(1)} = n^{\mathcal{O}(1)}$. Thus, we have $(f_n)_{n\geq 1}\in \mathsf{VP}_{\mathsf{R} (\text{or } \mathsf{R}^{+})}$, as desired.

\begin{remark}
\label{remark: ruling out exponential separation}
We showed above that when the number of complementable hypercube variables $r(n)=\mathcal{O}(\log n)$, families in $\VNP_{\mathsf{R} (\text{or } \mathsf{R}^{+})}^{[r(n)]}$ admit circuits of size $2^{\mathcal{O}(\log n)} \cdot n^{\mathcal{O}(1)}$. More generally, for any $r(n)$, the same argument gives circuits of size $2^{r(n)} \cdot n^{\mathcal{O}(1)}$. So, when $r(n) = o(n)$,  \VP$_{\mathsf{R} (\text{or } \mathsf{R}^{+})}$ and \VNP$_{\mathsf{R} (\text{or } \mathsf{R}^{+})}^{[r(n)]}$ cannot be exponentially separated. In contrast, when $r(n)=\Omega(n)$, \VP$_{\mathsf{R} (\text{or } \mathsf{R}^{+})}$ and \VNP$_{\mathsf{R} (\text{or } \mathsf{R}^{+})}^{[r(n)]}$ are exponentially separated as $(perm_{\beta(n)})_{n\geq 1}$ (where $\beta(n)$ is defined by $\beta(n) \log(\beta(n))=r(n)$) is in \VNP$_{\mathsf{R} (\text{or } \mathsf{R}^{+})}^{[r(n)]}$ but needs $2^{\Omega\big(\beta(n)\log(\beta(n))\big)} = 2^{\Omega(n)}$ sized circuits.
\end{remark}


\subsection{$\mathsf{VNP}_{\mathsf{R} \text{(or }\mathsf{R}^{+})}$ via Hypercube sum over Width-2 ABPs}
\label{VNP via hypercube sum over width-2 ABPs section}

The proof of $\mathsf{VP} \subseteq \mathsf{VNF}$ (and so, $\mathsf{VNP} = \mathsf{VNF}$) over fields in \cite{malod2008characterizing} works almost as is over $\mathsf{R}$ and $\mathsf{R}^{+}$ too (explained in Appendix \ref{VNP=VNF adapted proof appendix}). For any $p$-family $(f_n)_{n\geq 1} \in \mathsf{VP}_{\mathsf{R} (\text{or } \mathsf{R}^{+})}$, this  proof converts circuit computing $f_n$ into an equivalent polynomial-sized multiplicatively disjoint circuit $C_n$, expresses $f_n$ as the sum of values of parse trees of $C_n$, and then re-expresses this sum in terms of indicator variables (which serve as hypercube variables) to go over all subgraphs of $C_n$, using certain indicators to ensure that only the summands corresponding to parse trees survive. This gives $$f_n(X) = \underset{\substack{\underline{p}\in \{\infty, 0\}^{|V(C_n)|}\\ \underline{a}\in \{\infty, 0\}^{|E(C_n)|}}}{\bigoplus}g_n(X, \underline{p}, \underline{a}, \overline{\underline{p}}, \overline{\underline{a}})$$ where $p_v\mid_{v\in V(C_n)}$, $a_{(u,v)}\mid_{(u,v)\in E(C_n)}$ are hypercube variables, 
\begin{equation*}
g_n:= \Big(\underset{\substack{(u,v) \\\in E(C_n)}}{\bigotimes}B_{(u,v)}\Big)\otimes  p_{\operatorname{root}}\otimes  \Big(\underset{\substack{u \text{ is a } \\ \otimes \text{ gate}}}{\bigotimes}C_{u}\Big) \otimes  \Big(\underset{\substack{u \text{ is a } \\\oplus \text{ gate}}}{\bigotimes}D_{u}\Big) \otimes  \Big(\underset{\substack{u \ne \operatorname{root}}}{\bigotimes   }E_u\Big) \otimes  \Big(\underset{u\in \text{leaves}(C_n)}{\bigotimes}A_{u}\Big),
\end{equation*}

Let $\ell(u)$ (respectively $r(u)$) be the left child (resp. right child) of $u$ in the circuit $C_n$.
$$B_{(u,v)}:= \overline{a_{(u,v)}}\oplus a_{(u,v)}\otimes p_u \otimes p_v,~~~~ C_u := \overline{p_u} \oplus p_u \otimes a_{(\ell(u), u)}  \otimes a_{(r(u), u)}$$ 
$$D_u:= \overline{p_u}\oplus~ p_u\otimes (a_{(\ell(u), u)}\otimes \overline{a_{(r(u),u)}} \oplus a_{(r(u), u)}\otimes \overline{a_{(\ell(u), u)}}),$$ 
$$E_u:= \overline{p_u}\oplus    p_u\otimes  \underset{v~:~(u,v) \in E(C_n)}{\bigoplus   }a_{(u,v)}~\text{and}~ A_{u} := \overline{p_{u}} \oplus \text{label}(u)\otimes p_{u}.$$

As $B_{(u,v)}$'s, $C_u$'s, $D_u$'s, $E_u$'s and $A_{u}$'s have polynomial-sized formulas, so does $g_n$. Let us show that each of these also has a polynomial-sized width-2 ABP. Then, concatenating (to get the product of) all these polynomially-many width-2 ABPs would give a polynomial-sized width-2 ABP that computes $g_n$, thereby strengthening $\mathsf{VNP}_{\mathsf{R} (\text{or }\mathsf{R}^{+})} = \mathsf{VNF}_{\mathsf{R} (\text{or }\mathsf{R}^{+})}$ to $\mathsf{VNP}_{\mathsf{R} (\text{or }\mathsf{R}^{+})} = \mathsf{VNBP_2}_{~\mathsf{R} (\text{or }\mathsf{R}^{+})}$, where $\mathsf{VNBP_2}_{~\mathsf{R} (\text{or }\mathsf{R}^{+})}$ consists of families expressible as hypercube sum over a family with polynomial-sized width-2 ABPs over $\mathsf{R}$ (or $\mathsf{R}^{+}$). Note that $B_{(u,v)}$'s, $C_u$'s, $A_{u}$'s can be computed using width-2 ABPs shown in Figure \ref{label-Malod1}. Also, $D_u$'s can be computed using width-2 ABP shown in Figure \ref{label-Malod2}. The ABP in Figure \ref{label-Malod2} actually computes
\begin{figure}[h!]
\centering
\includegraphics[scale=0.6]{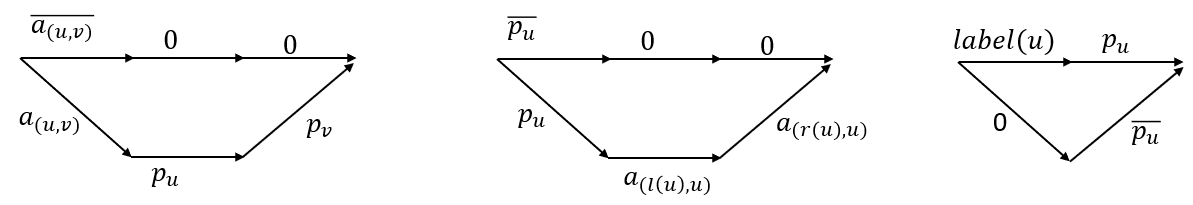}
\caption{Width-2 ABPs computing $B_{(u,v)}$'s, $C_u$'s, $A_{u}$'s.}
\label{label-Malod1}
\end{figure}
\begin{figure}[h!]
\centering
\includegraphics[scale=0.65]{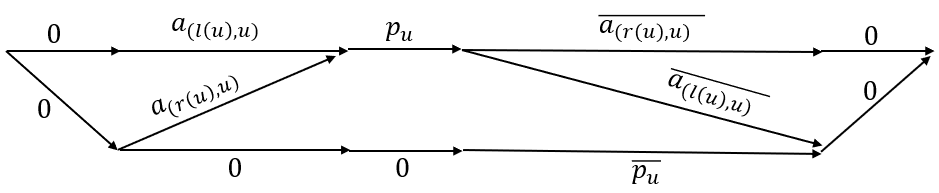}
\caption{Width-2 ABP computing $D_u$.}
\label{label-Malod2}
\end{figure} 
$$D_u \oplus p_u\otimes \big((a_{(\ell(u),u)}\otimes \overline{a_{(\ell(u),u)}}) \oplus (a_{(r(u),u)}\otimes \overline{a_{(r(u),u)}})\big),$$ but the extra terms $a_{(\ell(u),u)}\otimes \overline{a_{(\ell(u),u)}}$ and $a_{(r(u),u)}\otimes \overline{a_{(r(u),u)}}$ are not problematic as their value is $\infty$ for all possible $\infty$-$0$ substitutions of $a_{(\ell(u),u)}$ and $a_{(r(u),u)}$. Next, we show that each of the $E_u$'s can be computed by a polynomial-sized width-2 ABP. We have
\begin{equation*}
\begin{split}
E_u & := \overline{p_u}\oplus    p_u\otimes  \underset{v:~(u,v)\in E(C_n)}{\bigoplus   }a_{(u,v)}
\\
& = \underset{\substack{
v: ~(u,v)\in E(C_n)}}{\bigoplus   }\big(\overline{p_u}\oplus   p_u\otimes  a_{(u,v)}\big) = \underset{\substack{
v: ~(u,v)\in E(C_n)}}{\bigoplus   }\big(\overline{p_u}\oplus   a_{(u,v)}\big),
\end{split}
\end{equation*}
where the last equality follows from the observation that i) when $p_u=0$, both $\overline{p_u}\oplus p_u\otimes a_{(u,v)}$ and $\overline{p_u}\oplus a_{(u, v)}$ are $a_{(u,v)}$, and ii) when $p_u=\infty$, both $\overline{p_u}\oplus p_u\otimes a_{(u,v)}=0$ and $\overline{p_u}\oplus a_{(u, v)}= 0\oplus a_{(u,v)} = 0$ (for both $a_{(u,v)}=0$ and $a_{(u,v)}=\infty$). So, as the format $Q(f):= \begin{pmatrix}
f & 0 \\
0 & \infty
\end{pmatrix}$ supports addition (as $Q(f)\otimes Q(\infty)\otimes Q(g) = Q(f\oplus g)$), we get an $n^{O(1)}$ sized width-2 ABP that computes $E_u$ in $Q$ format.

\subsection{$\mathsf{VNP}_{\mathsf{R}^{+}}$ via Hypercube sum over general Width-1 ABPs}
\label{hypercube sum over general width-1 ABPs section}

Over $\mathsf{R}^{+}$, we show that each of $B_{(u,v)}$'s, $C_u$'s, $D_u$'s, $E_u$'s and $A_{u}$'s (as in the previous subsection) are also expressible as hypercube sum over product of constant many linear forms. Then, multiplying all these expressions together would give a way to express $g_n$ (and so, also $f_n$) as a hypercube sum over product of polynomially-many linear forms, thereby strengthening $\mathsf{VNP}_{\mathsf{R}^{+}} = \mathsf{VNF}_{\mathsf{R}^{+}}$ to $\mathsf{VNP}_{\mathsf{R}^{+}} = \mathsf{VNBP_1^g}_{~\mathsf{R}^{+}}$, where $\mathsf{VNBP^g_1}_{~\mathsf{R^{+}}}$ consists of families expressible as hypercube sum over a family 
with polynomial-sized width-1 general (i.e., edge labels can be linear forms) ABPs over $\mathsf{R}^{+}$. 

Note that $B_{(u,v)}:= \overline{a_{(u,v)}}\oplus a_{(u,v)}\otimes p_u\otimes p_v$ can be written as $\overline{a_{(u,v)}}\oplus p_u\otimes p_v$. This is because i) when $a_{(u,v)}=0$, both $B_{(u,v)}$ and $\overline{a_{(u,v)}}\oplus p_u\otimes p_v$ are $p_u\otimes p_v$, and ii) when $a_{(u,v)}=\infty$, $B_{(u,v)}=0$ and $\overline{a_{(u,v)}}\oplus p_u\otimes p_v = 0\oplus p_u\otimes p_v$, which is $0$ for all $\infty$-$0$ substitutions of $p_u$ and $p_v$. Therefore, $B_{(u,v)} = \overline{a_{(u,v)}}\oplus p_u\otimes p_v$. This, in turn, can be written as the following hypercube sum: $$\underset{\lambda_{(u,v)}\in \{\infty, 0\}}{\bigoplus}(\overline{a_{(u,v)}}\oplus \overline{\lambda_{(u,v)}})\otimes (p_u\oplus \lambda_{(u,v)})\otimes (p_v\oplus \lambda_{(u,v)}),$$ with $\lambda_{(u,v)}$ as a new hypercube variable, because the summand (i.e., product of three linear forms) is $\overline{a_{(u,v)}}$ and $p_u\otimes p_v$ when $\lambda_{(u,v)}=0$ and $\infty$ respectively. Since $C_u:= \overline{p_u}\oplus p_u\otimes a_{(\ell(u),u)}\otimes a_{(r(u), u)}$ has a form similar to $B_{(u,v)}$, a similar argument would show that $C_u$ can be written as the following hypercube sum (with $\gamma_u$ as a new hypercube variable): $$\underset{\gamma_{u}\in \{\infty, 0\}}{\bigoplus}\big(\overline{p_u}\oplus \overline{\gamma_u}\big)\otimes (a_{(\ell(u),u)}\oplus \gamma_u\big)\otimes \big(a_{(r(u),u)}\oplus \gamma_u\big).$$
Next, note that $D_u:= \overline{p_u}\oplus~ p_u\otimes (a_{(\ell(u), u)}\otimes \overline{a_{(r(u), u)}} \oplus a_{(r(u), u)}\otimes \overline{a_{(\ell(u), u)}})$ can be written as $\overline{p_u}\oplus~ (a_{(\ell(u), u)}\otimes \overline{a_{(r(u), u)}} \oplus a_{(r(u), u)}\otimes \overline{a_{(\ell(u), u)}})$. This, in turn, can be written as the following hypercube sum (with $\alpha_u$ and $\beta_u$ as new hypercube variables): 
$$\hspace{-1 cm}\underset{\alpha_u, \beta_u\in \{\infty, 0\}}{\bigoplus}
(\overline{p_u}\oplus \overline{\alpha_u})\otimes (a_{(\ell(u), u)}\oplus \alpha_u\oplus \beta_u)\otimes (\overline{a_{(r(u),u)}}\oplus \alpha_u\oplus \beta_u)\otimes (\overline{a_{(\ell(u),u)}}\oplus \alpha_u\oplus \overline{\beta_u})\otimes (a_{(r(u), u)}\oplus \alpha_u\oplus \overline{\beta_u}).$$
This is because the summand (i.e., product of five linear forms) is i) $\overline{p_u}$ when $\alpha_u=0$, $\beta_u=0$, ii) $\overline{p_u}$ when $\alpha_u=0$, $\beta_u=\infty$, iii) $a_{(\ell(u),u)}\otimes \overline{a_{(r(u),u)}}$ when $\alpha_u=\infty, \beta_u=\infty$ and iv) $a_{(r(u),u)}\otimes \overline{a_{(\ell(u),u)}}$ when $\alpha_u= \infty$, $\beta_u=0$. 

Next, $E_u:= \overline{p_u}\oplus p_u\otimes \underset{v:(u,v)\in E(C_n)}{\bigoplus}a_{(u,v)}$ can be written as $\overline{p_u}\oplus \underset{v:(u,v)\in E(C_n)}{\bigoplus}a_{(u,v)}$, which is already a linear form. Finally, note that $A_{u}:=\overline{p_{u}}\oplus \text{label}(u)\otimes p_{u}$ can also be written as   $\overline{p_{u}}\oplus \text{label}(u)$, which is already a linear form too. The last re-expression is correct as i) when $p_{u}=0$, both $A_{u}$ and  $\overline{p_{u}}\oplus \text{label}(u)$ are $\text{label}(u)$, and ii) when $p_{u}=\infty$, $A_{u} = 0$ and $\overline{p_{u}}\oplus \text{label}(u) = 0\oplus \text{label}(u)$, which is $0$ for all possible substitutions of $\text{label}(u)$ from $\mathsf{R}^{+}$.

\subsection{Hamiltionian Cycle family in $\mathsf{VNP}_{\mathsf{R} (\text{or }\mathsf{R}^{+})}$}

Let us show that the Hamiltonian cycle family $(HC_n)_{n\geq 1}$ belongs to $\mathsf{VNP}_{\mathsf{R} (\text{or }\mathsf{R}^{+})}$. Recall that $HC_n:= \underset{\sigma\in C_n}{\bigoplus}\underset{1\leq i\leq n}{\bigotimes}x_{i,\sigma(i)}$. Note that every permutation $\sigma$ of $[n]$ can be specified by a $\infty$-$0$ permutation matrix $Y$ of size $n\times n$, where for every $1\leq i,j\leq n$, the $(i,j)^{th}$ entry of $Y$ indicates whether $\sigma(i)=j$. Also, all cyclic permutations (i.e., consisting of a single cycle) $\sigma$ correspond to those permutation matrices $Y$ wherein for every $1\leq u, v\leq n$, $v$ is in $u$'s orbit (under repeated application of the permutation $\sigma)$; that is, $(u,v)^{th}$ entry of the $k^{th}$ power of the matrix $Y$ is $0$ for some $1\leq k< n$. So,
\begin{equation*}
\begin{split}
HC_n & = \underset{Y\in \{\infty, 0\}^{n\times n}}{\bigoplus} \Bigg(\big[\substack{Y \text{ $\in P_n$}}\big]\otimes \Big(\underset{1\leq u,v\leq n}{\bigotimes}~\underset{0\leq k <n}{\bigoplus}\big(\substack{ (u,v)^{th} \text{ entry of } Y^{k}}\big)\Big)\otimes \Big(\underset{1\leq i\leq n}{\bigotimes}\underset{1\leq t\leq n}{\bigoplus}x_{i,t}\otimes Y_{i,t}\Big)\Bigg).
\end{split}
\end{equation*}
where $[Y \in P_n]$ denotes the predicate of whether $Y$ is a permutation matrix with entries $\infty$ and $0$, where in every row (resp. column) exactly one $0$ appears and all other entries are $\infty$.
Observe that $(u,v)^{th}$ entry of $Y^k$ can be computed by a polynomial-sized ABP (and so, circuit). This is because $Y^{k}$ can be computed by an ABP of width $n$ consisting of $k+1$ layers wherein the edges between any two layers are labelled by entries of $Y$; that is, for any $1\leq i, j\leq n$, the edge from $i^{th}$ node of any layer to $j^{th}$ node of the next layer is labelled with $Y_{i,j}$. Also, we have
\begin{equation*}
\begin{split}
[Y \in P_n] & = 
[Y \text{ has }\geq \text{one 0 in each row}]\otimes \big[\substack{\text{No two entries in same}\\ \text{row/column are both 0}}\big]\\
& \Big(\underset{1\leq i\leq n}{\bigotimes}\underset{1\leq j\leq n}{\bigoplus}Y_{i,j}\Big)\otimes \Bigg(\underset{\substack{(a,b), (c,d)\in [n]\times [n]:\\
a=c \text{ or } b=d,\\
\text{and } (a,b)\neq (c,d)}}{\bigotimes}(\overline{Y_{a,b}}\oplus \overline{Y_{c,d}})\Bigg),
\end{split}
\end{equation*}
which has a polynomial-sized circuit too. Hence, we get $(HC_n)_{n\geq 1}\in \mathsf{VNP}_{\mathsf{R} (\text{or } \mathsf{R}^{+})}$, as desired. 

In 2015, Grochow proved that over any totally ordered semiring, Hamiltonian cycle family cannot be obtained as a monotone affine $p$-projection of the permanent family (see Theorem 4.2 in \cite{grochow2017monotone}). So, in particular, $\mathsf{R}^{+}$, $(HC_n)_{n\geq 1}$ cannot be obtained as a $p$-projection of $(perm_n)_{n\geq 1}$ over $\mathsf{R}^{+}$. Therefore, since $(HC_{n})_{n\geq 1}\in \mathsf{VNP}_{\mathsf{R}^{+}}$, it follows that $(perm_n)_{n\geq 1}$ is not $\mathsf{VNP}_{\mathsf{R}^{+}}$ hard under $p$-projections.

\section{Non-Universality of Width-2 ABPs over $\mathsf{R}$ and $\mathsf{R}^{+}$}
\label{non-universality N section}

\nonuniversalwidthtwosemiringN*


\begin{proof}
For the sake of contradiction, assume that there is a width-$2$ ABP $\Gamma$ that computes $\bigoplus_{i=1}^3 (x_i\otimes y_i)$ over $\mathsf{R}$ (or $\mathsf{R}^{+}$). Assume that no edge of $\Gamma$ is labelled $\infty$. This is safe because any source-to-sink path containing an edge labelled $\infty$ contributes nothing (i.e., contributes $\infty$, which is the additive identity) to the sum of weights of all source-to-sink paths. First, we prove the following lemma: 
\begin{lem}
\label{Exactly one x1 and one y1}
For each $1\leq i\leq 3$, there is a source-to-sink path $P_i$ in $\Gamma$ that has exactly one $x_i$, exactly one $y_i$ and none of the other four variables (proved for $i=1$ below, and similar argument holds for $i=2$ and $i=3$ too).
\end{lem}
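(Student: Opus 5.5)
We prove the claim for $i=1$; the other two cases are symmetric. The idea is to choose one substitution that makes $x_1\otimes y_1$ the unique cheap monomial. The ABP computes the minimum over source-to-sink paths of path weights, and a path weight is a constant plus a nonnegative-integer combination of variables. We will show that some path must have weight exactly matching the monomial $x_1\otimes y_1$ in its variable part.

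\textbf{Step 1 (a path using only $x_1,y_1$).} Set $x_2=y_2=x_3=y_3=N$ for a huge $N$, and treat $x_1,y_1$ as free. Over $\mathsf{R}^{+}$ we may take $x_1=y_1=0$ and let $N\to\infty$. The target value is then $0$. The number of paths is finite and no edge is labelled $\infty$, so every path containing some variable of index $2$ or $3$ has weight at least $cN+(\text{const})\to\infty$. Hence the minimum over the remaining paths must equal $x_1+y_1$ identically in $(x_1,y_1)$. These remaining paths are those whose labels involve only $x_1,y_1$ and constants.

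Over $\mathsf{R}$, which allows negative values, we instead take $x_1,y_1$ ranging over a bounded region, say $[-1,1]$, and $N$ large. The same finiteness argument shows that the minimum over $\{x_1,y_1\}$-only paths equals $x_1+y_1$ on that region.

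\textbf{Step 2 (exact exponents).} Let $\mathcal{P}$ be the finite set of $\{x_1,y_1\}$-only paths, with weights $c_P+a_Px_1+b_Py_1$. We have $\min_{P\in\mathcal{P}}(c_P+a_Px_1+b_Py_1)=x_1+y_1$ on an open region (over $\mathsf{R}$), or on all of $\mathsf{R}_{\ge 0}^2$ (over $\mathsf{R}^{+}$). A minimum of finitely many affine functions that is itself affine on an open set must coincide with one of them on an open subset. So some $P$ has $a_P=b_P=1$ and $c_P=0$.

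Over $\mathsf{R}^{+}$ we argue via the domain instead. Evaluate at points $(t,0)$ and $(0,t)$ with $t$ large, and near the origin. For $P$ to attain $x_1+y_1$ on a positive-measure set we need $(a_P,b_P)=(1,1)$, since the monomials are distinct linear forms that can agree only on a null set. Choosing a generic interior point gives a single path achieving the minimum, and it must have $a_P=b_P=1$.

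\textbf{Main obstacle.} The delicate point is handling absorption rigorously. We must ensure that the ``only $x_1,y_1$'' restriction is justified uniformly, which needs the finiteness of the path set and the absence of $\infty$-labelled edges. We must also ensure that a generic-point argument picks out a path with exponents exactly $(1,1)$ rather than a mixture that merely matches $x_1+y_1$ in the minimum. Both issues are resolved by the piecewise-linear genericity argument in Step 2.
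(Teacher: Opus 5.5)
Your proposal is correct, but it takes a different route from the paper's proof.

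\textbf{The paper's route.} The paper argues by contradiction. Assuming no such $P_1$ exists, it sorts the paths containing both $x_1$ and $y_1$ into three types: repeated $x_1$, repeated $y_1$, or an extra foreign variable. It substitutes $x_2=y_2=x_3=y_3=\infty$ directly, which kills the third type. It then uses the substitutions $(\infty,\infty)$, $(0,\infty)$ and $(\infty,0)$ for $(x_1,y_1)$ to force the constant, pure-$x_1$ and pure-$y_1$ contributions to be $\infty$. Finally it factors $x_1\otimes y_1$ out of what remains. Every surviving monomial of the quotient contains $x_1$ or $y_1$, so the quotient is strictly increasing and cannot be identically $0$.

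\textbf{Your route.} You argue directly. Once only $\{x_1,y_1\}$-paths survive, the output is a minimum of finitely many affine functions $c_P+a_Px_1+b_Py_1$ that equals $x_1+y_1$ on an open set. Each such function not identical to $x_1+y_1$ agrees with it only on a line or nowhere, and finitely many lines cannot cover an open set. Hence some path has weight function identically $x_1+y_1$, i.e.\ $a_P=b_P=1$ and $c_P=0$.

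\textbf{What each buys.} Your version is shorter, avoids the case split, and also yields $c_P=0$. It generalizes at once: any monomial that is the strict minimizer on some open region must occur as the exact weight of a path in any weakest ABP. The paper's version stays symbolic and needs only finitely many explicit $0/\infty$ substitutions plus a monotonicity observation, with no measure or genericity reasoning.

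\textbf{Minor points of presentation.}
\begin{itemize}
\item The limit $N\to\infty$ in Step 1 is unnecessary. $\infty$ is an element of the semiring, so you can substitute it directly, as the paper does. Paths through the other four variables then contribute $\infty$ pointwise for every $(x_1,y_1)$.
\item Your $\mathsf{R}^{+}$ sentence ``take $x_1=y_1=0$'' suggests a single point. What you need, and what your argument actually supports, is the pointwise statement for every $(x_1,y_1)$.
\item The second paragraph of Step 2 is redundant. For $\mathsf{R}^{+}$, just apply the first paragraph to the open set $(0,\infty)^2$.
\item Its claim that a generic point has a \emph{single} minimizing path is not quite right, since distinct paths may share the same weight function. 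This is harmless: all you need is that every minimizer at a generic point has weight function identically $x_1+y_1$.
\end{itemize}
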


\begin{proof}
For the sake of contradiction, assume that no such path $P_1$ exists. Then, every source-to-sink path in $\Gamma$ containing both $x_1$ and $y_1$ is of one of the following three types: 1) $x_1$ appears at least twice, 2) $y_1$ appears at least twice, and 3) $x_1$ and $y_1$ appear exactly once each, but are accompanied by at least one of the other four variables. Note that $\bigoplus_{i=1}^3 (x_i\otimes y_i)$ is equal to the sum of weights of paths of these three types, along with the weights of paths that do not contain at least one of $x_1$ and $y_1$. Now, substitute $x_2=y_2=x_3=y_3=\infty$. Then, $\bigoplus_{i=1}^3(x_i\otimes y_i)$ becomes $x_1\otimes y_1$. The weights of paths that contain neither $x_1$ nor $y_1$ become constants; let $c\in \mathsf{R} (\text{or } \mathsf{R}^{+})$ be the sum of all these constants. The sum of weights of paths that contain $x_1$ but not $y_1$ takes the form $(a_1\otimes x_1)\oplus (a_2\otimes x_1^{\otimes 2})\oplus \ldots (a_t\otimes x_1^{\otimes t})$ for some $t\geq 0$ and $a_1, \ldots, a_t\in \mathsf{R} (\text{or } \mathsf{R}^{+})$. Similarly, the sum of weights of paths that contain $y_1$ but not $x_1$ takes the form $(b_1\otimes y_1)\oplus (b_2\otimes y_1^{\otimes 2})\oplus \ldots\oplus (b_r\otimes y_1^{\otimes r})$ for some $r\geq 0$ and $b_1, \ldots, b_r\in \mathsf{R} (\text{or } \mathsf{R}^{+})$. The weights of all Type 3 paths vanish (i.e., become $\infty$). The sum of weights of all Type $1$ paths takes the form $x_1^{\otimes 2}\otimes \big(\bigoplus_{i=0}^{\mu-1} x_1^{\otimes i}\otimes p_{i}(y_1)\big)$ for some $\mu \geq 0$ and some polynomials $p_0, \ldots, p_{\mu-1}$ in $\mathsf{R}[y_1]$ (or $\mathsf{R}^{+}[y_1]$). Similarly, the sum of weights of all Type 2 paths takes the form $y_1^{\otimes 2}\otimes \big(\bigoplus_{j=0}^{\nu-1}y_1^{\otimes j}\otimes q_{j}(x_1)\big)$ for some $\nu \geq 0$ and some polynomials $q_0, \ldots, q_{\nu-1}$ in $\mathsf{R}[x_1]$ (or $\mathsf{R}^{+}[x_1]$). Thus, overall, $x_1\otimes y_1 = c \oplus  (a_1\otimes x_1\oplus a_2\otimes x_1^{\otimes 2}\oplus \ldots a_t\otimes x_1^{\otimes t})\oplus (b_1\otimes y_1\oplus b_2\otimes y_1^{\otimes 2}\oplus \ldots \oplus b_r\otimes y_1^{\otimes r}) \oplus \Big(x_1^{\otimes 2}\otimes \big( \bigoplus_{i=0}^{\mu-1}x_1^{\otimes i}\otimes p_{i}(y_1)\big)\Big) \oplus \Big(y_1^{\otimes 2}\otimes \big(\bigoplus_{j=0}^{\nu-1}y_1^{\otimes j}\otimes q_{j}(x_1)\big)\Big)$. Substituting $x_1=y_1=\infty$ both sides, we get $c=\infty$. Substituting $x_1=0$ and $y_1=\infty$ both sides, we get $\infty =(a_1\oplus a_2\oplus \ldots a_t)\oplus \bigoplus_{i=0}^{\mu-1} p_{i}(\infty)$. So, it follows that $a_1=a_2=\ldots =a_t=\infty$ and each of the $\mu$ polynomials $p_0(y_1), \ldots, p_{\mu-1}(y_1)$ has no constant term (i.e., it is $\infty$). Similarly, substituting $x_1=\infty$ and $y_1=0$, we get $\infty = (b_1\oplus b_2\oplus \ldots \oplus b_r)\oplus \bigoplus_{j=0}^{\nu-1}q_j(\infty)$. So, it follows that $b_1=\ldots = b_r=\infty$ and each of the $\nu$ polynomials $q_0(x_1), \ldots, q_{\nu-1}(x_1)$ has no constant term. Let $p_{0}',\ldots, p_{\mu-1}'$ denote the polynomials obtained by pulling out a factor of $y_1$ from $p_0, \ldots, p_{\mu-1}$ respectively. Similarly, let $q_{0}', \ldots, q_{\nu-1}'$ denote the polynomials obtained by pulling out a factor of $x_1$ from $q_0, \ldots, q_{\nu-1}$ respectively. Thus, $x_1\otimes y_1= \Big(x_1^{\otimes 2}\otimes y_1\otimes \big(\bigoplus_{i=0}^{\mu-1}x_1^{i}\otimes p_i'(y_1)\big)\Big)$ $\oplus$ $\Big(x_1\otimes y_1^{\otimes 2}\otimes \big(\bigoplus_{j=0}^{\nu-1}y_1^{j}\otimes q_j'(x_1)\big)\Big)$. Pull out a factor of $x_1\otimes y_1$ (same as LHS) from RHS; what remains must be $0$ for all $x_1,y_1\in \mathbb{R}$ (or $\mathbb{R}_{\geq 0}$) as LHS = RHS. That is, for all $x_1, y_1\in \mathbb{R}$ (or $\mathbb{R}_{\geq 0}$),
we have $\Big(x_1\otimes \big(\bigoplus_{i=0}^{\mu-1}x_1^{i}\otimes p_i'(y_1)\big)\Big)\oplus \Big(y_1\otimes \big(\bigoplus_{j=0}^{\nu-1}y_1^{j}\otimes q_j'(x_1)\big)\Big) = 0$.
However, any monomial of non-$\infty$ coefficient in LHS contains at least one of $x_1$ and $y_1$; so, if the values substituted for $x_1$ and $y_1$ are strictly increased, the value taken by any monomial of non-$\infty$ coefficient in the LHS also strictly increases, thereby making it impossible for LHS to remain $0$ (i.e., RHS), a contradiction. This proves Lemma \ref{Exactly one x1 and one y1}. 
\end{proof}

Amongst the edges that appear in $E(P_1)\cup E(P_2)\cup E(P_3)$, consider a first (moving from left to right) edge that is labelled by an indeterminate. Without loss of generality, assume that this edge belongs to path $P_1$, and it is labelled by $x_1$. Also, let $i$ and $i+1$ denote the indices of the layers containing the tail and head of this edge respectively.
This edge has four possible orientations, i.e., top level of layer $i$ to top level of layer $i+1$, top level of layer $i$ to bottom level of layer $i+1$, bottom level of layer $i$ to top level of layer $i+1$, and bottom level of layer $i$ to bottom level of layer $i+1$ (see Figure \ref{non-universal 1}). Next, amongst the edges that appear in $E(P_2)\cup E(P_3)$ after layer $i$, consider a first (again, moving from left to right) edge that is labelled by an indeterminate. Without loss of generality, assume that this edge belongs to path $P_2$, and it is labelled by $x_2$. Also, let $j$ and $j+1$ denote the indices of the layers containing the tail and head of this edge respectively.  Assume that this edge is directed from top level of layer $j$ to top level of layer $j+1$ (see Figure \ref{non-universal 2}); the cases corresponding to the other three orientations of this edge can be analyzed in a similar way. 

\begin{figure}[ht!]
\centering
\includegraphics[scale=0.65]{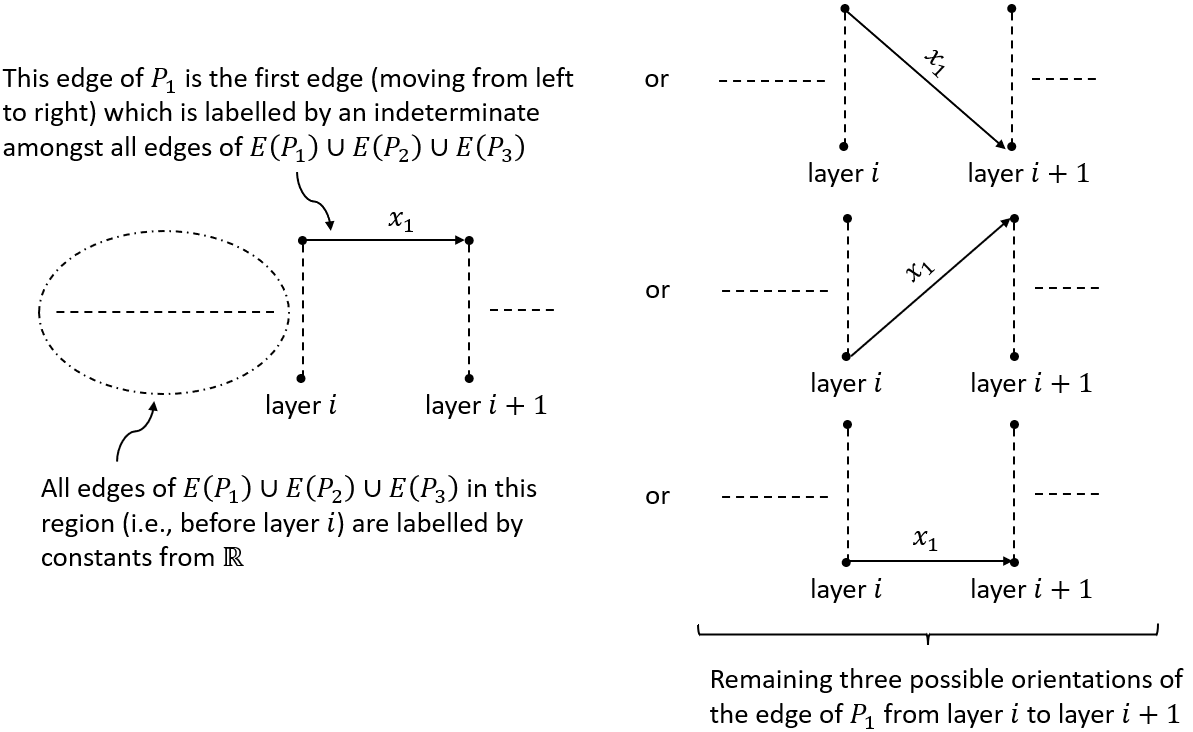}
\caption{}
\label{non-universal 1}
\end{figure}

\begin{figure}[ht!]
\centering
\includegraphics[scale=0.65]{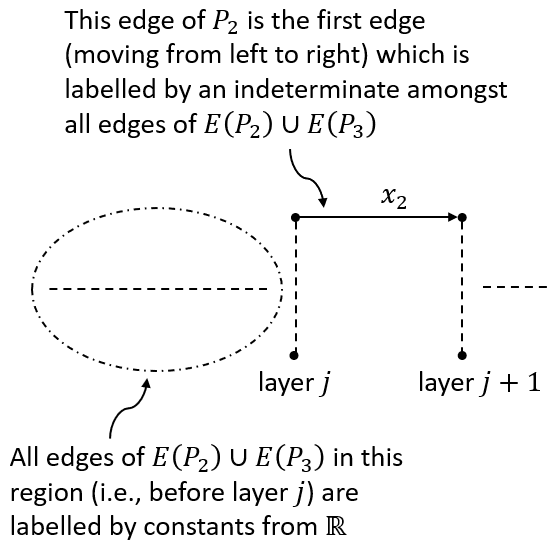}
\caption{ }
\label{non-universal 2}
\end{figure}

Consider the following two cases: 1) $j\neq i$ and 2) $j=i$. We present the proof for Case 1 now, and we defer the proof of Case 2 to Appendix \ref{Case 2 non-universality N appendix} as its analysis is similar to that of Case 1. First, we show that the edge of $P_1$ from layer $j-1$ to layer $j$ must have its head at the bottom level of layer $j$. Suppose not. That is, assume that $P_1$'s edge from layer $j-1$ to layer $j$ has its head at top level (see Figure \ref{non-universality N conc paths img 1}). Then, $P_2$'s portion from layers $\leq j$ has no variable, and $P_1$'s portion from layers $\geq j$ has either zero or one $y_1$. So, concatenating $P_2$'s portion from layers $\leq j$ with $P_1$'s portion from layers $\geq j$ gives a source-to-sink path in $\Gamma$ whose weight is of the form $c$ or $c\otimes y_1$ for some $c\in \mathbb{R}$ (or $\mathbb{R}_{\geq 0}$). We derive a contradiction in both cases as follows: In the former case, $(x_1\otimes y_1)\oplus (x_2\otimes y_2)\oplus (x_3\otimes y_3)\leq c$. Substituting $x_1=x_2=x_3=c+1$ and $y_1=y_2=y_3=0$ gives  $c+1\leq c$. In the latter case, $(x_1\otimes y_1)\oplus (x_2\otimes y_2)\oplus (x_3\otimes y_3) \leq c\otimes y_1$. Substituting $x_1=x_2=x_3=c+1$ and $y_1=y_2=y_3=0$ gives  $c+1\leq c$. This proves that the edge of $P_1$ from layer $j-1$ to layer $j$ has its head at bottom level of layer $j$.



\begin{figure}[ht!]
\includegraphics[scale=0.65]{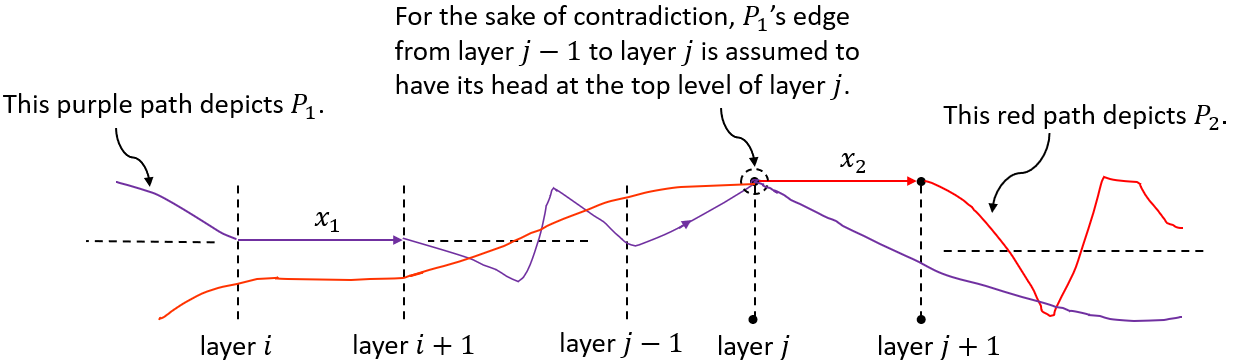}
\caption{}
\label{non-universality N conc paths img 1}
\end{figure}

Next, consider the edge of $P_1$ from layer $j$ to layer $j+1$. Its tail is same as the head of the edge of $P_1$ from layer $j-1$ to layer $j$ (which, as argued above, is at the bottom level of layer $j$). We show that the head of this edge must be at the bottom level of layer $j+1$. Suppose not. That is, assume that the edge of $P_1$ from layer $j$ to layer $j+1$ is directed from bottom level to top level (see Figure \ref{non-universality N conc paths img 2}). Then, $P_2$'s portion from layers $\leq j+1$ has one $x_2$, and $P_1$'s portion from layers $\geq j+1$ has either zero or one $y_1$. So, concatenating $P_2$'s portion from layers $\leq j+1$ with $P_1$'s portion from layers $\geq j+1$ gives a source-to-sink path in $\Gamma$ whose weight is of the form $c\otimes x_2$ or $c\otimes x_2\otimes y_1$ for some $c\in \mathbb{R}$ (or $\mathbb{R}_{\geq 0}$). We derive a contradiction in both cases as follows: In the former case, $(x_1\otimes y_1)\oplus (x_2\otimes y_2)\oplus (x_3\otimes y_3) \leq c \otimes x_2$. Substituting $x_1=x_2=x_3=0$ and $y_1=y_2=y_3=c+1$ gives $c+1\leq c$. In the latter case, $(x_1\otimes y_1)\oplus (x_2\otimes y_2)\oplus (x_3\otimes y_3)\leq c\otimes x_2\otimes y_1$. Substituting $y_1=x_2=y_3=0$ and $y_2=x_1=x_3= c+1$ gives $c+1\leq c$. This proves that the edge of $P_1$ from layer $j$ to layer $j+1$ has its head at bottom level of layer $j+1$.

\begin{figure}[ht!]
\centering
\includegraphics[scale=0.65]{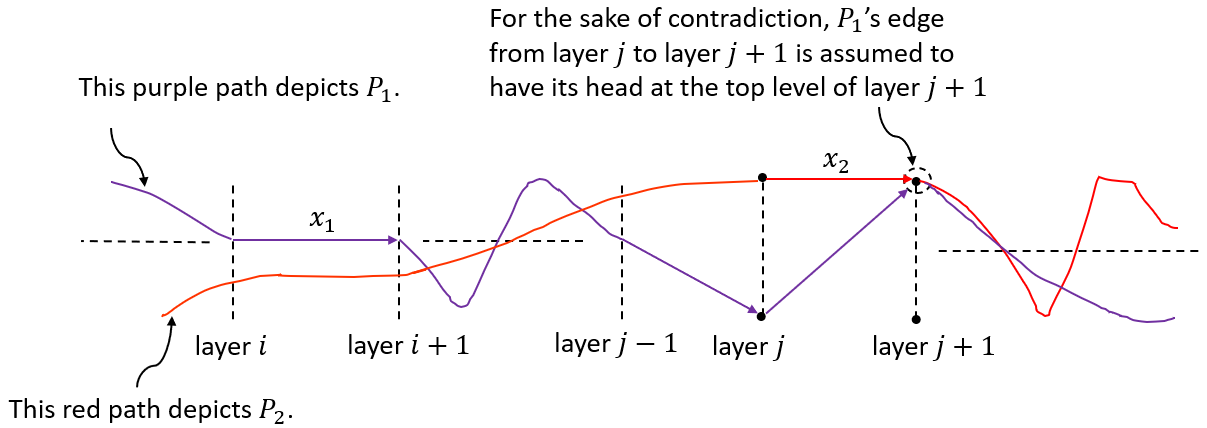}
\caption{}
\label{non-universality N conc paths img 2}
\end{figure}

Now, consider the edge of $P_3$ from layer $j$ to layer $j+1$. This edge must be different from the edge joining the top level of layer $j$ to the top level of layer $j+1$; this is because the latter edge is labelled with $x_2$, which does not belong to $P_3$. Also, we show that this edge cannot be directed from bottom level of layer $j$ to top level of layer $j+1$. Suppose not (see Figure \ref{non-universality N conc paths img 3}). Then, $P_2$'s portion from layers $\geq j+1$ has one $y_2$, and $P_3$'s portion from layers $\leq j+1$ has no variables or one $x_3$ or one $y_3$. So, concatenating $P_3$'s portion from layers $\leq j+1$ with $P_2
$'s portion from layers $\geq j+1$ gives a source-to-sink path in $\Gamma$ whose weight is of the form $c\otimes y_2$ or $c\otimes y_2\otimes x_3$ or $c\otimes y_2 \otimes y_3$ for some $c\in \mathbb{R}$ (or $\mathbb{R}_{\geq 0}$). We derive a contradiction in these three cases as follows: In the first case, $(x_1\otimes y_1)\oplus (x_2\otimes y_2)\oplus (x_3\otimes y_3)\leq c\otimes y_2$. Substituting $x_1=x_2=x_3=c+1$ and $y_1=y_2=y_3=0$ gives $c+1 \leq c$. In the second case, $(x_1\otimes y_1)\oplus (x_2\otimes y_2)\oplus (x_3\otimes y_3) \leq c\otimes y_2\otimes x_3$. Substituting $x_1=y_2=x_3=0$ and $x_2=y_1=y_3=c+1$ gives $c+1\leq c$. In the third case, $(x_1\otimes y_1)\oplus (x_2\otimes y_2)\oplus (x_3\otimes y_3) \leq c\otimes y_2\otimes y_3$. Substituting $y_1=y_2=y_3=0$ and $x_1=x_2=x_3=c+1$ gives $c+1\leq c$. 

\begin{figure}[ht!]
\centering
\includegraphics[scale=0.65]{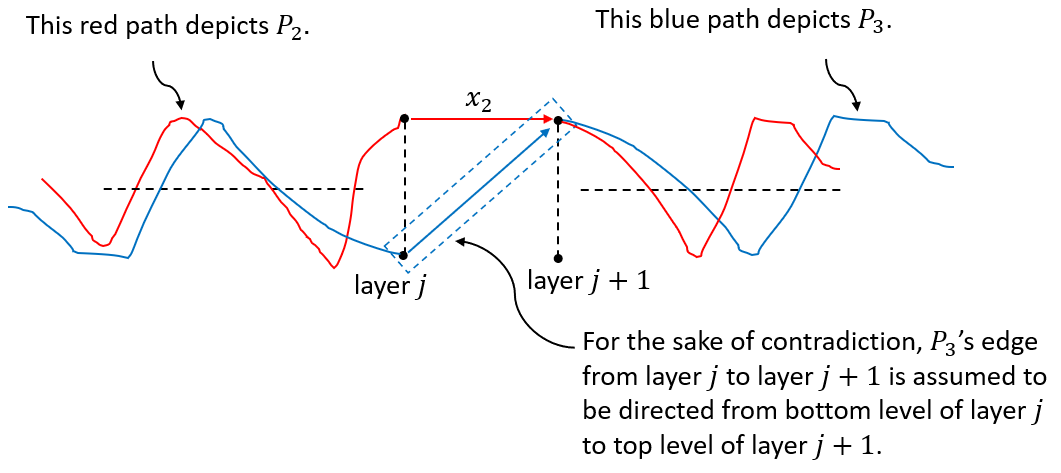}
\caption{}
\label{non-universality N conc paths img 3}
\end{figure}

Thus, the edge of $P_3$ from layer $j$ to layer $j+1$ has its head at the bottom level of layer $j+1$. Also, as proved earlier, the edge of $P_1$ from layer $j$ to layer $j+1$ has its head at the bottom level of layer $j+1$. Now, see Figure \ref{non-universality N conc paths img 4}. Note that $P_1$'s portion from layers $\geq j+1$ has zero or one $y_1$. Also, $P_3$'s portion from layers $\leq j+1$ has i) no variables, or ii) one $x_3$, or iii) one $y_3$. So, concatenating $P_1$'s portion from layers $\geq j+1$ with $P_3$'s portion from layers $\leq j+1$ gives a source-to-sink path in $\Gamma$ whose weight is of the form $c$,  $c\otimes x_3$, $c\otimes y_3$, $c\otimes y_1$, $c\otimes y_1\otimes x_3$ or $c\otimes y_1\otimes y_3$ for some $c\in \mathbb{R}$ (or $\mathbb{R}_{\geq 0}$). We derive a contradiction in these six cases as follows: In the first case, $(x_1\otimes y_1)\oplus (x_2\otimes y_2)\oplus (x_3\otimes y_3) \leq c$. Substituting $x_1=x_2=x_3=0$ and $y_1=y_2=y_3=c+1$ gives $c+1\leq c$. In the second case, $(x_1\otimes y_1)\oplus (x_2\otimes y_2)\oplus (x_3\otimes y_3) \leq c\otimes x_3$. Substituting $x_1=x_2=x_3=0$ and $y_1=y_2=y_3=c+1$ gives $c+1\leq c$. In the third case, $(x_1\otimes y_1)\oplus (x_2\otimes y_2)\oplus (x_3\otimes y_3) \leq c\otimes y_3$. Substituting $x_1=x_2=x_3=c+1$ and $y_1=y_2=y_3=0$ gives $c+1 \leq c$. In the fourth case, $(x_1\otimes y_1)\oplus (x_2\otimes y_2)\oplus (x_3\otimes y_3)\leq c\otimes y_1$. Substituting $x_1=x_2=x_3=c+1$ and $y_1=y_2=y_3=0$ gives $c+1\leq c$. In the fifth case, $(x_1\otimes y_1)\oplus (x_2\otimes y_2)\oplus (x_3\otimes y_3)\leq c\otimes y_1\otimes x_3$. Substituting $y_1=x_2=x_3=0$ and $x_1=y_2=y_3=c+1$ gives $c+1\leq c$. In the sixth case, $(x_1\otimes y_1)\oplus (x_2\otimes y_2)\oplus (x_3\otimes y_3) \leq c\otimes y_1\otimes y_3$. Substituting $y_1=y_2=y_3=0$ \& $x_1=x_2=x_3=c+1$ gives $c+1\leq c$. This proves Theorem \ref{non-universal width-2 semiring N}.
\begin{figure}[ht!]
\centering
\includegraphics[scale=0.65]{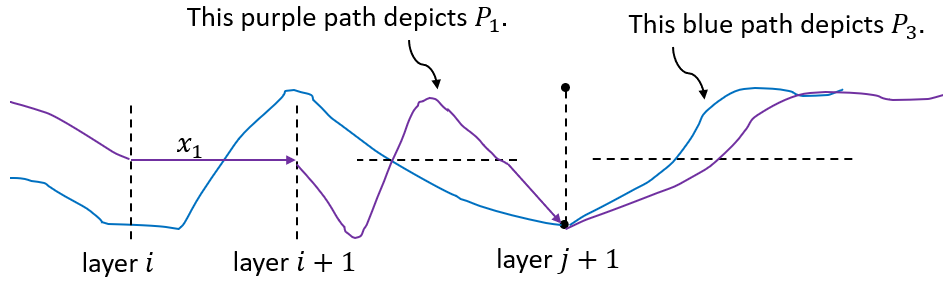}
\caption{}
\label{non-universality N conc paths img 4}
\end{figure}

\end{proof}

\section{Low Depth Formulas to Low Width ABPs over $\mathsf{R}$ and $\mathsf{R}^{+}$}
\label{low depth circuits to low width ABPs section}




\lowdepthformulastolowwidthABPs*

So, in particular, logarithmic depth alternating $\bigoplus\bigotimes \ldots \bigoplus \bigotimes$ formulas can be efficiently simulated using logarithmic width ABPs. We present the proof of Theorem \ref{Low depth circuits to low width ABPs} below.


\begin{proof}
Moving bottom-to-top in the formula $C$, we number the levels as $1, \ldots, 2p$, and let $s_1, \ldots, s_{2p}$
denote the fan-in's of gates in these levels respectively (see Figure \ref{sigma pi circuit figure}). For each $1\leq i\leq 2p$, define the format matrix $M_i(f)$ (associated with Level $i$) as follows\footnote{The non-$\infty$ values are highlighted in red for clarity.}: 

$$M_i(f) := ~~~\begin{pNiceMatrix}[first-col, first-row]
&  & \underset{\downarrow}{\mbox{Col}_2} & & & \underset{\downarrow}{\mbox{Col}_{2p+2-i}} &  \\
& \textcolor{red}{f} & \infty & \ldots & \infty & \textcolor{red}{0} & \infty & \ldots & \infty \\
& \infty &  & \adots & \textcolor{red}{0} & \adots & & & \vdots \\
& \vdots & \adots & \textcolor{red}{\adots} &   \adots &  & &  & \vdots \\
\mbox{Row}_{2p+1-i} \rightarrow &  \infty & \textcolor{red}{0} & \infty & \ldots & \ldots & \ldots & \ldots & \infty\\
&  \infty & \ldots &\ldots & \ldots & \ldots & \ldots & \ldots & \infty
\\
& \vdots &  & &  &  & & & \vdots\\
& \infty & \ldots & \ldots & \ldots & \ldots & \ldots & \ldots & \infty 
\end{pNiceMatrix}_{(2p+1)\times (2p+1)}$$

    
\begin{figure}[ht!]
\centering
\includegraphics[scale=0.7, trim = {0 0 0 3}, clip=true]{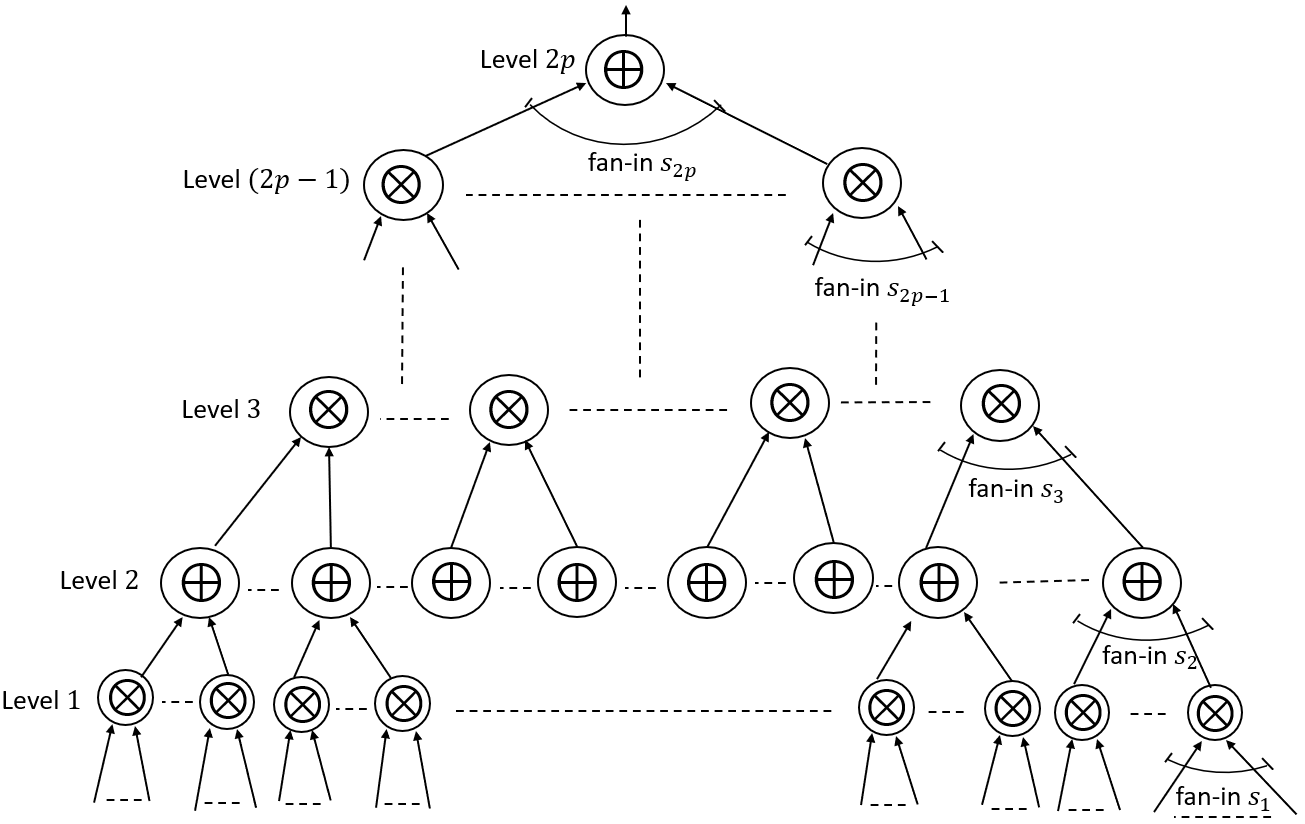}
\caption{The depth $2p$ alternating $\bigoplus\bigotimes\ldots \bigoplus\bigotimes$ formula $C$ in Theorem \ref{Low depth circuits to low width ABPs}. The levels are $1, 2, \ldots, 2p$ (moving bottom-to-top), and fan-in's of gates in these levels are $s_1, \ldots, s_{2p}$ respectively.}
\label{sigma pi circuit figure}
\end{figure}


Observe the following format conversions:
\begin{itemize}
\item Using $M_1(f)$ once to get $M_1(f\otimes x)$ (for any variable/constant $x$):
\begin{equation*}
\underbrace{\begin{pmatrix}
\textcolor{red}{f} & \infty & \ldots & \infty & \textcolor{red}{0} \\
\infty &  & \adots & \textcolor{red}{0} & \infty  \\
\vdots & \adots & \textcolor{red}{\adots} &   \adots & \vdots \\
  \infty & \textcolor{red}{0} & \infty & \ldots & \infty\\
\infty & \ldots &\ldots & \ldots & \infty
\end{pmatrix}}_{M_1(f)}\otimes 
\begin{pmatrix}
\textcolor{red}{x} & \infty & \ldots & \ldots & \infty \\
\infty & \textcolor{red}{0}  &  &  & \vdots  \\
\vdots &  & \textcolor{red}{\ddots} &   & \vdots \\
 \vdots &  &  & \textcolor{red}{0} & \infty\\
\infty & \ldots &\ldots & \ldots & \textcolor{red}{0}
\end{pmatrix} = \underbrace{\begin{pmatrix}
\textcolor{red}{f\otimes x} & \infty & \ldots & \infty & \textcolor{red}{0} \\
\infty &  & \adots & \textcolor{red}{0} & \infty  \\
\vdots & \adots & \textcolor{red}{\adots} &   \adots & \vdots \\
  \infty & \textcolor{red}{0} & \infty & \ldots & \infty\\
\infty & \ldots &\ldots & \ldots & \infty
\end{pmatrix}}_{M_1(f\otimes x)}
\end{equation*}

\item For every $2\leq i\leq 2p$,
\begin{itemize}
\item Using $M_i(f)$ and $M_{i-1}(g)$ once each to get $M_i(f\otimes g)$:
\begin{equation*}
\begin{split}
\begin{pNiceMatrix}[first-col, first-row]
&  & \underset{\downarrow}{\mbox{Col}_2} & & & \underset{\downarrow}{\mbox{Col}_{2p+2-i}} &  \\
& \textcolor{red}{f} & \textcolor{red}{0} & \infty & \ldots & \ldots & \ldots & \ldots & \infty \\
& \infty & \infty & \textcolor{red}{0} & \ddots &  & & & \infty \\
& \vdots & & \ddots & \textcolor{red}{\ddots} &   \ddots &   &  & \vdots \\
\mbox{Row}_{2p+1-i} \rightarrow &  \infty & \ldots & \ldots & \infty & \textcolor{red}{0} & \infty & \ldots & \infty\\
&  \infty & \ldots &\ldots & \ldots & \ldots & \ldots & \ldots & \infty
\\
& \infty & \ldots & \ldots & \ldots & \ldots & \ldots & \ldots & \infty
\\
& \vdots &  & &  &  & & & \vdots\\
& \infty & \ldots & \ldots & \ldots & \ldots & \ldots & \ldots & \infty 
\end{pNiceMatrix}\otimes \\
\begin{pNiceMatrix}[first-col, first-row]
&  & \underset{\downarrow}{\mbox{Col}_2} & & & \underset{\downarrow}{\mbox{Col}_{2p+2-i}} &  \\
& \textcolor{red}{g} & \infty & \ldots & \ldots & \infty & \infty & \ldots & \infty \\
& \infty &  &  &  & \textcolor{red}{0} & \infty & & \infty \\
& \vdots & &  & \textcolor{red}{\adots} & \adots   &   &  & \vdots \\
 &  \infty &  & \textcolor{red}{\adots} & \adots &  & &  & \vdots\\
&  \infty & \textcolor{red}{0} &\adots &  &  &  &  & \vdots
\\
\mbox{Row}_{2p+3-i} \rightarrow & \textcolor{red}{0} & \infty & &  &  & & & \vdots\\
& \vdots &  & \ &  &  &  &  & \vdots
\\
& \infty & \ldots & \ldots & \ldots & \ldots & \ldots & \ldots & \infty 
\end{pNiceMatrix}\\
=~~~~ \begin{pNiceMatrix}[first-col, first-row]
&  & \underset{\downarrow}{\mbox{Col}_2} & & & \underset{\downarrow}{\mbox{Col}_{2p+2-i}} &  \\
& \textcolor{red}{f\otimes g} & \infty & \ldots & \infty & \textcolor{red}{0} & \infty & \ldots & \infty \\
& \infty &  & \adots & \textcolor{red}{0} & \adots & & & \vdots \\
& \vdots & \adots & \textcolor{red}{\adots} &   \adots &  & &  & \vdots \\
\mbox{Row}_{2p+1-i} \rightarrow &  \infty & \textcolor{red}{0} & \infty & \ldots & \ldots & \ldots & \ldots & \infty\\
&  \infty & \ldots &\ldots & \ldots & \ldots & \ldots & \ldots & \infty
\\
& \vdots &  & &  &  & & & \vdots\\
& \infty & \ldots & \ldots & \ldots & \ldots & \ldots & \ldots & \infty 
\end{pNiceMatrix},
\end{split}
\end{equation*}
where first matrix in LHS is obtained by permuting columns $2, \ldots, 2p+2-i$ of $M_{i}(f)$, second matrx in LHS is transpose of $M_{i-1}(g)$, and the matrix in RHS is $M_{i}(f\otimes g)$.

\item Using $M_i(f)$ and $M_{i-1}(g)$ once each to get $M_i(f\oplus g)$:
\begin{equation*}
\begin{split}
\begin{pNiceMatrix}[first-col, first-row]
&  & \underset{\downarrow}{\mbox{Col}_2} & & & \underset{\downarrow}{\mbox{Col}_{2p+2-i}} &  \\
& \textcolor{red}{f} & \infty & \ldots & \infty & \textcolor{red}{0} & \infty & \ldots & \infty \\
& \infty &  & \adots & \textcolor{red}{0} & \adots & & & \vdots \\
& \vdots & \adots & \textcolor{red}{\adots} &   \adots &  & &  & \vdots \\
\mbox{Row}_{2p+1-i} \rightarrow &  \infty & \textcolor{red}{0} & \infty & \ldots & \ldots & \ldots & \ldots & \infty\\
&  \infty & \ldots &\ldots & \ldots & \ldots & \ldots & \ldots & \infty
\\
& \vdots &  & &  &  & & & \vdots\\
& \infty & \ldots & \ldots & \ldots & \ldots & \ldots & \ldots & \infty 
\end{pNiceMatrix}\otimes \\
\begin{pNiceMatrix}[first-col, first-row]
&  &  & & & \underset{\downarrow}{\mbox{Col}_{2p+2-i}} &  \\
& \textcolor{red}{0} & \infty & \ldots & \ldots & \ldots & \ldots & \ldots & \infty \\
& \infty & \textcolor{red}{0} & \infty & \ddots &  & & & \infty \\
& \vdots & & \textcolor{red}{\ddots} & \ddots &   \ddots &   &  & \vdots \\
 &  \infty & \ldots & \ldots & \textcolor{red}{0} & \infty & \infty & \ldots & \infty\\
\mbox{Row}_{2p+2-i} \rightarrow &  \textcolor{red}{g} & \ldots &\ldots & \ldots & \textcolor{red}{0} & \ldots & \ldots & \infty
\\
& \infty & \ldots & \ldots & \ldots & \ldots & \ldots & \ldots & \infty
\\
& \vdots &  & &  &  & & & \vdots\\
& \infty & \ldots & \ldots & \ldots & \ldots & \ldots & \ldots & \infty 
\end{pNiceMatrix}\\
= ~~~\begin{pNiceMatrix}[first-col, first-row]
&  & \underset{\downarrow}{\mbox{Col}_2} & & & \underset{\downarrow}{\mbox{Col}_{2p+2-i}} &  \\
& \textcolor{red}{f\oplus g} & \infty & \ldots & \infty & \textcolor{red}{0} & \infty & \ldots & \infty \\
& \infty &  & \adots & \textcolor{red}{0} & \adots & & & \vdots \\
& \vdots & \adots & \textcolor{red}{\adots} &   \adots &  & &  & \vdots \\
\mbox{Row}_{2p+1-i} \rightarrow &  \infty & \textcolor{red}{0} & \infty & \ldots & \ldots & \ldots & \ldots & \infty\\
&  \infty & \ldots &\ldots & \ldots & \ldots & \ldots & \ldots & \infty
\\
& \vdots &  & &  &  & & & \vdots\\
& \infty & \ldots & \ldots & \ldots & \ldots & \ldots & \ldots & \infty 
\end{pNiceMatrix},
\end{split}
\end{equation*}
where first matrix in LHS is $M_i(f)$, the matrix in RHS is $M_i(f\oplus g)$, and second matrix in LHS is obtained from $M_{i-1}(g)$ by changing Col$_{1}$ to Col$_{1} \oplus $Col$_{2}$, scaling Col$_2$ by $\infty$, permuting columns $2, \ldots, 2p+3-i$, and permuting rows $1, \ldots, 2p+2-i$ as follows:

\begin{equation*}
\begin{split}
& \begin{pNiceMatrix}[first-col, first-row]
&  & \underset{\downarrow}{\mbox{Col}_2} & & & \underset{\downarrow}{\mbox{Col}_{2p+3-i}} &  \\
& \textcolor{red}{g} & \infty & \ldots & \infty & \textcolor{red}{0} & \infty & \ldots & \infty \\
& \infty &  & \adots & \textcolor{red}{0} & \adots & & & \vdots \\
& \vdots & \adots & \textcolor{red}{\adots} &   \adots &  & &  & \vdots \\
\mbox{Row}_{2p+2-i} \rightarrow &  \infty & \textcolor{red}{0} & \infty & \ldots & \ldots & \ldots & \ldots & \infty\\
&  \infty & \ldots &\ldots & \ldots & \ldots & \ldots & \ldots & \infty
\\
& \vdots &  & &  &  & & & \vdots\\
& \infty & \ldots & \ldots & \ldots & \ldots & \ldots & \ldots & \infty 
\end{pNiceMatrix}\xrightarrow[\text{Scale } \text{Col}_2 \text{ by } \infty]{\text{Col}_1\leftarrow \text{Col}_1\oplus \text{Col}_2}\\
& \begin{pNiceMatrix}[first-col, first-row]
&  & \underset{\downarrow}{\mbox{Col}_2} & \underset{\downarrow}{\mbox{Col}_3} & & & \underset{\downarrow}{\mbox{Col}_{2p+3-i}} &  \\
& \textcolor{red}{g} & \infty & \infty & \ldots & \infty & \textcolor{red}{0} & \infty & \ldots & \infty \\
& \infty & &  & \adots & \textcolor{red}{\adots} & \adots & & & \vdots \\
& \vdots & & \adots & \textcolor{red}{\adots} &   \adots &  & &  & \vdots \\
 &  \infty & \infty & \textcolor{red}{0} & \infty & \ldots & \ldots & \ldots & \ldots & \infty\\
 \mbox{Row}_{2p+2-i} \rightarrow&  \textcolor{red}{0} & \ldots &\ldots & \ldots & \ldots & \ldots & \ldots & \ldots & \infty
\\
& \vdots &  & &  &  & & & & \vdots\\
& \infty & \ldots & \ldots & \ldots & \ldots & \ldots & \ldots & \ldots & \infty 
\end{pNiceMatrix}\xrightarrow[]{\text{Permute Col}_2, \ldots \text{Col}_{2p+3-i}}
\end{split}
\end{equation*}

\begin{equation*}
\begin{split}
& \begin{pNiceMatrix}[first-col, first-row]
&  &  & & & \underset{\downarrow}{\mbox{Col}_{2p+2-i}} &  \\
& \textcolor{red}{g} & \infty & \ldots & \infty & \textcolor{red}{0} & \infty & \ldots & \infty \\
& \infty &  & \adots & \textcolor{red}{0} & \adots & & & \vdots \\
& \vdots & \adots & \textcolor{red}{\adots} &   \adots &  & &  & \vdots \\
 &  \infty & \textcolor{red}{0} & \infty & \ldots & \ldots & \ldots & \ldots & \infty\\
\mbox{Row}_{2p+2-i} \rightarrow &  \textcolor{red}{0} & \infty &\ldots & \ldots & \ldots & \ldots & \ldots & \infty
\\
& \vdots &  & &  &  & & & \vdots\\
& \infty & \ldots & \ldots & \ldots & \ldots & \ldots & \ldots & \infty 
\end{pNiceMatrix}\xrightarrow[]{\text{Permute Row}_1, \ldots \text{Row}_{2p+2-i}}\\
& \begin{pNiceMatrix}[first-col, first-row]
&  &  & & & \underset{\downarrow}{\mbox{Col}_{2p+2-i}} &  \\
& \textcolor{red}{0} & \infty & \ldots & \ldots & \ldots & \ldots & \ldots & \infty \\
& \infty & \textcolor{red}{0} & \infty & \ddots &  & & & \infty \\
& \vdots & & \textcolor{red}{\ddots} & \ddots &   \ddots &   &  & \vdots \\
 &  \infty & \ldots & \ldots & \textcolor{red}{0} & \infty & \infty & \ldots & \infty\\
\mbox{Row}_{2p+2-i} \rightarrow &  \textcolor{red}{g} & \ldots &\ldots & \ldots & \textcolor{red}{0} & \ldots & \ldots & \infty
\\
& \infty & \ldots & \ldots & \ldots & \ldots & \ldots & \ldots & \infty
\\
& \vdots &  & &  &  & & & \vdots\\
& \infty & \ldots & \ldots & \ldots & \ldots & \ldots & \ldots & \infty 
\end{pNiceMatrix}.
\end{split}
\end{equation*}
\end{itemize}
\end{itemize}

Now, we describe how the above format conversions lead to the desired construction. First, for each Level 1 $\bigotimes$ gate, repeatedly use $M_1(f)$-to-$M_1(f\otimes x)$ conversion to get a sequence of $\mathcal{O}(s_1)$ $(2p+1)\times (2p+1)$ matrices that compute its output in $M_1$ format. Next, for each Level 2 $\bigoplus$ gate, repeatedly use $M_2(f)$-and-$M_1(g)$-to-$M_2(f\oplus g)$ conversion to get a sequence of $\mathcal{O}(s_1s_2)$ matrices that compute its output in $M_2$ format. Next, for each Level 3 $\bigotimes$ gate, repeatedly use $M_3(f)$-and-$M_2(g)$-to-$M_3(f\otimes g)$ conversion to get a seqeunce of $\mathcal{O}(s_1s_2s_3)$ matrices that computes its output in $M_3$ format. Continuing this process, we finally get a sequence of $\mathcal{O}(s_1s_2\ldots s_{2p}) = \mathcal{O}(s)$ matrices that compute the output of the top gate (i.e., Level $2p$ $\bigoplus$ gate) in $M_{2p}$ format.  To get ABP size, we put an extra multiplicative factor of $\mathcal{O}(p)$ as each matrix here has $\mathcal{O}(p)$ non-$\infty$ entries; so, there are $\mathcal{O}(p)$ edges between any two layers of the width $(2p+1)$ ABP so constructed. This proves Theorem \ref{Low depth circuits to low width ABPs}. 
\end{proof}

It is known that Shortest $s$-$t$ Path polynomial over $\mathsf{N}:=(\mathbb{N}\cup\{\infty\}, \oplus, \otimes)$ has a depth $2p$ alternating $\bigoplus\bigotimes\ldots \bigoplus\bigotimes$ formula of size $n^{\mathcal{O}(p\cdot n^{\frac{1}{p}})}$ (see Section 3.2 in \cite{mahajan2019shortest}). The corresponding pure $(\min, +)$ DP is as follows: For all vertices $i, j$ and lengths $\ell< n$, it stores $\mathbb{T}(i,j, \ell)$, i.e., minimum cost of any $i$-to-$j$ path that uses $\leq \ell$ edges. To compute $\mathbb{T}(i,j, \ell)$'s, it uses the following recurrence:
$$\mathbb{T}(i,j, \ell)= \underset{k_1, \ldots, k_{r-1}\in V(G)}{\min}\Big\{\mathbb{T}\Big(i, k_1, \frac{\ell}{r}\Big)+ \mathbb{T}\Big(k_1, k_2, \frac{\ell}{r}\Big) +\ldots+\mathbb{T}\Big(k_{r-2}, k_{r-1}, \frac{\ell}{r}\Big)+ \mathbb{T}\Big(k_{r-1}, j, \frac{\ell}{r}\Big)\Big\},$$
where $r:= n^{\frac{1}{p}}$. In the corresponding formula, every $\min$ (i.e., $\oplus$) gate has fan-in $\mathcal{O}(n^{r-1})$, every $+$ (i.e., $\otimes$) gate has fan-in $\mathcal{O}(r)$ and depth of the circuit is $2p$. So, the overall formula size is $n^{\mathcal{O}(p\cdot n^{\frac{1}{p}})}$.

This DP works because the optimal path can be obtained by guessing $r-1$ vertices at which the path would be broken if it were to be divided into $r$ equal length sub-paths, and then combining optimal paths of lengths $\leq \ell/r$ each between all pairs of consecutive guessed vertices (computed earlier by the DP). This may be problematic if these optimal paths share common vertices (and so, their merger gives a walk, instead of a path). It is not an issue over $\mathsf{N}$ (and also, $\mathsf{R}^{+}$) because when all costs are non-negative, an optimal path can be recovered from an optimal walk by skipping its portion between first and last appearances of any vertex (i.e., skipping cycles within the walk). Over $\mathsf{R}$, we cannot say the same as a skipped cycle may have negative cost (and so, its removal from the walk  increases the total cost); nevertheless, we can still use the same DP for acyclic graphs (as then, walks are same as paths) or more generally, graphs containing only non-negative cycles. 

Now, as the polynomial computed by any size $s$ ABP over $\mathsf{R}$ or $\mathsf{R}^{+}$ can be seen as a projection of Shortest Path polynomial of a size $s$ acyclic graph, we can use the above DP to simulate the ABP by a depth $2p$ alternating $\bigoplus\bigotimes\ldots \bigoplus\bigotimes$ formula of size $s^{\mathcal{O}(p\cdot s^{\frac{1}{p}})}$. Then, using Theorem \ref{Low depth circuits to low width ABPs}, this formula can be simulated by a width $(2p+1)$ ABP of size $s^{\mathcal{O}(p\cdot s^{\frac{1}{p}})}$. So, we get the following corollary:
\vspace{0.15 cm}

\ABPwidthreductioncorollary*


\section{Width-2 ABPs for $\infty$-$0$ Bivariate Polynomials over $\mathsf{R}^{+}$}
\label{bivariate construction section}

\bivariateconstructiontheorem*

\begin{proof}
Let $x^{\otimes a_1}\otimes y^{\otimes b_1}, x^{\otimes a_2}\otimes y^{\otimes b_2}, \ldots, x^{\otimes a_\ell}\otimes y^{\otimes b_\ell}$ denote the coefficient $0$ monomials of $f$, listed in non-increasing order of $x$ powers (i.e., $a_1\geq a_2 \geq \ldots \geq a_\ell$). Suppose that $a_i=a_j$ (say $=a$) for some $1\leq i<j\leq \ell$. Then, consider the $i^{th}$ and $j^{th}$ monomials, i.e., $x^{\otimes a}\otimes y^{\otimes b_i}$ and $x^{\otimes a}\otimes y^{\otimes b_j}$ respectively. If $b_i\geq b_j$, then $x^{\otimes a}\otimes y^{\otimes b_i} \geq x^{\otimes a}\otimes y^{\otimes b_j}$ for all possible substitutions of $x$ and $y$ from $\mathsf{R}^{+}$ and thus, the $i^{th}$ monomial can be safely dropped (i.e., it gets absorbed into the $j^{th}$ monomial). Similarly, if $b_j>b_i$, then $x^{\otimes a}\otimes y^{\otimes b_j}\geq x^{\otimes a}\otimes y^{\otimes b_i}$ for all possible substitutions of $x$ and $y$ from $\mathsf{R}^{+}$ and thus, the $j^{th}$ monomial can be safely dropped (i.e., it gets absorbed into the $i^{th}$ monomial). So, after exhaustively making such absorptions, we safely assume that $a_1>a_2>\ldots>a_\ell$. Next, let us analyze the relative order of magnitudes of $y$ powers. Suppose that $b_i\geq b_j$ for some $1\leq i<j\leq \ell$. Then, consider the $i^{th}$ and $j^{th}$ monomials, i.e., $x^{\otimes a_i}\otimes y^{\otimes b_i}$ and $x^{\otimes a_j}\otimes y^{\otimes b_j}$ respectively. Since $b_i\geq b_j$ and $a_i>a_j$, $x^{\otimes a_i}\otimes y^{\otimes b_i}\geq x^{\otimes a_j}\otimes y^{\otimes b_j}$ for all possible substitutions of $x$ and $y$ from $\mathsf{R}^{+}$ and thus, the $i^{th}$ monomial can be safely dropped (i.e., it gets absorbed into the $j^{th}$ monomial). So, after exhaustively making such absorptions, we safely assume that $b_1< b_2<\ldots <b_\ell$. We illustrate our construction for $\ell=3$ (i.e., three monomials) below, but it can be generalized to larger $\ell$'s (i.e., more monomials) too. For $\ell=3$, we have $f=(x^{\otimes a_1}\otimes y^{\otimes b_1})\oplus (x^{\otimes a_2}\otimes y^{\otimes b_2})\oplus (x^{\otimes a_3}\otimes y^{\otimes b_3})$. As argued above, $a_1>a_2>a_3$ and $b_1<b_2<b_3$. Observe that the width-2 ABP shown in Figure \ref{width-2 bivariate construction} computes $f$ over $\mathsf{R}^{+}$. This is because i) source-to-sink path passing through the leftmost bridge has weight $y^{\otimes (b_1-1)}\otimes y\otimes x^{\otimes (a_1-a_2)}\otimes x^{\otimes (a_2-a_3)}\otimes x^{\otimes a_3} = x^{\otimes a_1}\otimes y^{\otimes b_1}$, ii) source-to-sink path passing through the middle bridge has weight $y^{\otimes (b_1-1)}\otimes y\otimes y^{\otimes (b_2-b_1-1)}\otimes y\otimes x^{\otimes (a_2-a_3)}\otimes x^{\otimes a_3} = x^{\otimes a_2}\otimes y^{\otimes b_2}$, and iii) source-to-sink path passing through the rightmost bridge has weight $y^{\otimes (b_1-1)}\otimes y\otimes y^{\otimes (b_2-b_1-1)}\otimes y\otimes y^{\otimes (b_3-b_2-1)}\otimes y\otimes x^{\otimes a_3} = x^{\otimes a_3}\otimes y^{\otimes b_{3}}$. More generally, for any $\ell$, this gives a width-2 ABP of size $\mathcal{O}\big(b_1+\max\{a_1-a_2, b_2-b_1\}+\max\{a_2-a_3, b_3-b_2\}+\ldots +\max\{a_{\ell-1}-a_{\ell}, b_{\ell}-b_{\ell-1}\}+a_{\ell}\big)$, which is $\mathcal{O}(\operatorname{degree}(f))$. This proves Theorem \ref{uniform-coefficient bivariates over N}.
\begin{figure}[ht!]
\centering
\includegraphics[scale=0.6]{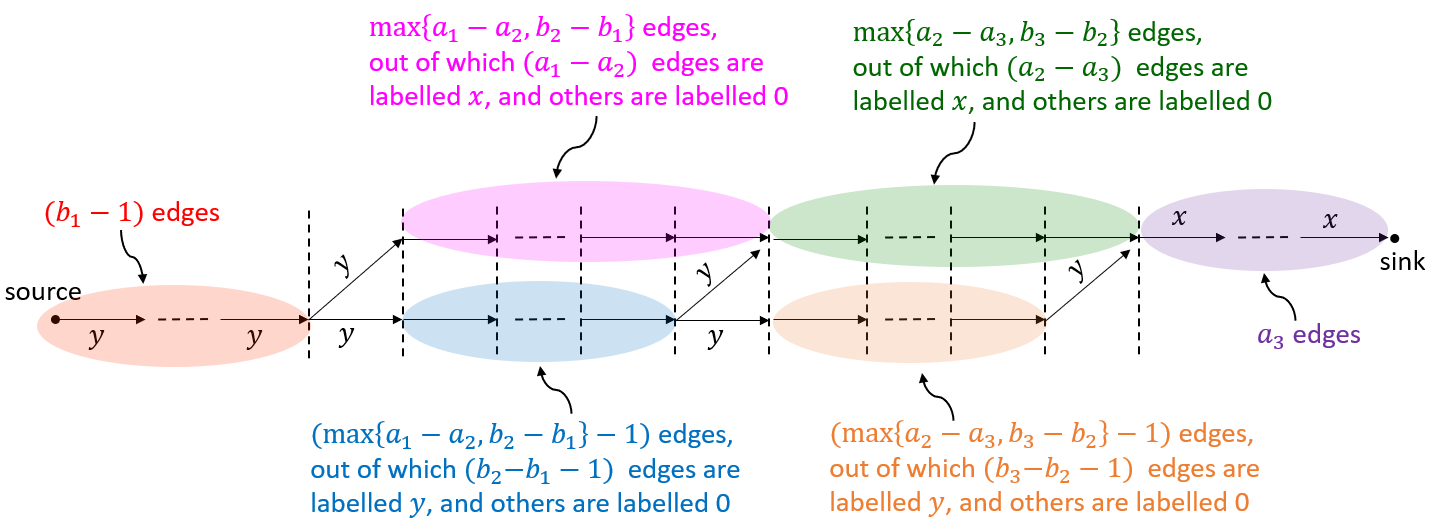}$~~~~~~~~~~~~~~~~~~~~~~~~~~~~~~~~~~~~~~~~~~~~~~~~~~~~~~~~~~~~~~~~~~~~~~~~~~~~~~$
\caption{: A width-2 ABP that computes the bivariate polynomial $(x^{\otimes a_1}\otimes y^{\otimes b_1})\oplus (x^{\otimes a_2}\otimes y^{\otimes b_2})\oplus (x^{\otimes a_3}\otimes y^{\otimes b_3})$ with three monomials (where $a_1>a_2>a_3$ and $b_1<b_2<b_3$) over $\mathsf{R}^{+}$.}
\label{width-2 bivariate construction}
\end{figure}    
\end{proof}

\section{Conclusion and Open Problems}
\label{sec: conclusion}
As mentioned earlier, the key question that motivated us to begin this work (but is still unresolved) is whether there is an analogue of Ben-Or \& Cleve's result \cite{cleve1988computing} over min-plus semirings $\mathsf{R}$ and $\mathsf{R}^{+}$. That is, can formulas be efficiently simulated using bounded-width ABPs over $\mathsf{R}$ and $\mathsf{R}^{+}$? Jukna observed that the sub-semiring $(\{\infty, 0\}, \oplus, \otimes)$ is isomorphic to the Boolean semiring $(\{0,1\}, \vee, \wedge)$ via the isomorphism $\infty \mapsto 0$ and $0\mapsto 1$ (see, for example, Lemma 7 in \cite{jukna2015lower}). Using his observation, it can be argued that an analogue of Ben-or \& Cleve's result over min-plus semirings would imply that polynomial-sized monotone Boolean formulas can be simulated using polynomial-sized monotone bounded-width branching programs, i.e., $m$-$\mathsf{NC}^{1} \subseteq m$-$\mathsf{BWBP}$ thus refuting a conjecture of  Grigni and Sipser \cite{grigni1992monotone}, that $\mathsf{Majority}$ function (which is known to belong to $m$-$\mathsf{NC^{1}}$) does not belong to $m$-$\mathsf{BWBP}$.

Another direction could be to build upon lower bound proof devised by Jukna and Schnitger in 2016 \cite{jukna2016optimality}. They showed that any ABP computing a polynomial $f$ (over $\mathsf{R}$ or $\mathsf{R}^{+}$) must have at least as many disjoint \emph{cuts}\footnote{Subset of variable labelled edges that intersects every source-to-sink path.} as the \emph{length}\footnote{Smallest monomial degree in any polynomial defining the same function as $f$.} of $f$; so, as size of each cut is at least the \emph{cover number}\footnote{Smallest number of variables which when substituted as $\infty$ make $f$ evaluate to $\infty$.} of $f$, the ABP's size must be at least the product of length and cover number. Their proof works even for unbounded-width ABPs, but it gives only polynomial lower bounds. Strengthening their argument in the context of bounded-width ABPs to get stronger lower bounds is an interesting open problem.

Next, although we defined an analogue of the class $\mathsf{VNP}$ over $\mathsf{R}$ and $\mathsf{R}^{+}$, it remains to identify $p$-families complete for the class (if any) under $p$-projections. As discussed earlier, permanent family $(perm_n)_{n\geq 1}$ cannot be $\mathsf{VNP}_{\mathsf{R}^{+}}$ complete due to \cite{grochow2017monotone}; but, is it $\mathsf{VNP}_{\mathsf{R}}$-complete? Another potential candidate polynomial for $\mathsf{VNP}$ over $\mathsf{R}$ and $\mathsf{R}^{+}$ is the Hamiltonian cycle family $(HC_n)_{n\geq 1}$. 

Also, our arguments for non-universality for width-2 ABPs (Theorem \ref{non-universal width-2 semiring N}) works only for weakest width-2 ABPs. It would be interesting to generalize it to even more general width-2 ABPs. Further, while we showed $\mathsf{VNP}_{\mathsf{R}^{+}} = \mathsf{VNBP_1^{g}}_{~\mathsf{R}^{+}}$, it is unclear if this hold true over $\mathsf{R}$ as well. If yes, it would subsume the result that $\mathsf{VNP}_{\mathsf{R}} = \mathsf{VNBP_2}_{~\mathsf{R}}$ as weakest width-2 ABPs can efficiently simulate general width-$1$ ABPs over $\mathsf{R}$. This is because for any constant $a\in \mathsf{R}\setminus \{\infty\}$ and variable $x$, $Q(a\otimes x)$ can be computed by scaling first row and second column of $Q(x)$ by $a$ and its multiplicative inverse (i.e., $-a$) respectively, where the format matrix $Q(f):= \begin{pmatrix}
f & 0\\
0 & \infty
\end{pmatrix}$ supports addition via $Q(f)\otimes Q(\infty)\otimes Q(g)= Q(f\oplus g)$. 


\begin{appsection}{Adapting Brent's Depth reduction for Formulas over $\mathsf{R}^{+}$}{Brent's depth reduction N appendix}

We are given a formula $F$ of size $s$ computing a polynomial over $\mathsf{R}^{+}$. Our goal is to build a formula of size $s^{\mathcal{O}(1)}$ and depth $\mathcal{O}(\log(s))$ that computes the same polynomial. This was already shown over rings by Brent \cite{brent1974parallel}. We make a minor change in his proof so that it also works over $\mathsf{R}^{+}$.

\textbf{This part is same as in Brent's proof}:\\
Start from the root node and at every step, pick that child of the current node for which the subformula rooted at it has larger size. Stop at the first node $v$ for which $\operatorname{size}(F_v)$ becomes $\leq \frac{2s}{3}$, where $F_v$ denotes the sub-formula of $F$ rooted at node $v$. We have $p(F) = \big(A \otimes p(F_v)\big) \oplus B$, where $A$ and $B$ are some polynomials over $\mathsf{R}^{+}$, and $p(F)$ and $p(F_v)$ denote the polynomials computed by formulas $F$ and $F_v$ respectively. Remove $F_v$ from $F$ and put a fresh variable (say $y$) in place of $v$, and let $F^{\text{new}}$ denote the formula so obtained. Note that $\operatorname{size}(F_v)\geq \frac{s}{3}$ and so, $\operatorname{size}(F^{\text{new}})\leq \frac{2s}{3}$. We have $p(F^{\text{new}}) = (A\otimes y) \oplus B$, where $p(F^{\text{new}})$ denotes the polynomial computed by the formula $F^{\text{new}}$. Note that $B = p(F^{\text{new}})\mid_{y=\infty}$ and $A\oplus B = p(F^{\text{new}})\mid_{y=0}$, where $p(F^{\text{new}})\mid_{y=\infty}$ and $p(F^{\text{new}})\mid_{y=0}$ are the polynomials obtained by substituting $y$ as $\infty$ and $0$ in the polynomial $p(F^{\text{new}})$ respectively.  

\textbf{This part is slightly different from Brent's proof}:\\
As per Brent's proof, the next step would say that $A = p(F^{new})\mid_{y=0} - B$. However, over semiring $\mathsf{R}^{+}$, we do not have the freedom to work with $-B$, i.e., additive inverse of $B$. We circumvent this minor difficulty as follows: Observe that $\big(B ~\otimes~  p(F_v)\big) \oplus B$ is same as $B$. That is, $\min\{B+p(F_v), B\} = B$. This is true because the polynomial $p(F_v)$ always evaluates to a non-negative value for any substitution of its variables from $\mathsf{R}^{+}$. So, $p(F) = \big(A \otimes p(F_v)\big)\oplus B$ can alternatively be written as $\big((A\oplus B) \otimes p(F_v)\big)\oplus B$. Next, plugging in expressions for $A\oplus B$ and $B$ obtained earlier, we get $p(F) = \Big(\big(p(F^{\text{new}})\mid_{y=0}\big)\otimes p(F_v)\Big)\oplus \Big(p(F^{\text{new}})\mid_{y=\infty}\Big)$. This expression gives a natural way to build a depth-reduced formula equivalent to $F$ (i.e., computing $p(F)$) given depth-reduced formulas obtained by recursion on $\leq \frac{2s}{3}$-sized formulas $F_v$ and $F^{\text{new}}$. The corresponding size and depth recurrences are $\operatorname{size}(s)\leq 3\cdot \operatorname{size}(2s/3) + \mathcal{O}(1)$ and $\operatorname{depth}(s)\leq \operatorname{depth}(2s/3)+\mathcal{O}(1)$. Solving these, we get $s^{\mathcal{O}(1)}$ and $\mathcal{O}(\log(s))$ as size and depth of the equivalent formula so built respectively.
\end{appsection}

\begin{appsection}{Analysis of Case 2 in the proof of Theorem \ref{non-universal width-2 semiring N}}{Case 2 non-universality N appendix}

\textbf{Case 2:} $j=i$\\
First, we show that the edge of $P_1$ from layer $i$ to layer $i+1$ must have its head at the bottom level of layer $i+1$. Suppose not. Then, $P_2$'s portion from layers $\leq i+1$ has one $x_2$, and $P_1$'s portion from layers $\geq i+1$ has one $y_1$. So, concatenating $P_2$'s portion from layers $\leq i+1$ with $P_1$'s portion from layers $\geq i+1$ gives us a source-to-sink path in $\Gamma$ whose weight is of the form $c\otimes x_2\otimes y_1$ for some $c\in \mathbb{R}$ (or $\mathbb{R}_{\geq 0}$). So, we have $(x_1\otimes y_1)\oplus (x_2\otimes y_2)\oplus (x_3\otimes y_3) \leq c\otimes x_2\otimes y_1$. Substituting $x_1=y_2=x_3=c+1$ and $y_1=x_2=y_3=0$ gives $c+1\leq c$, a contradiction. 

\begin{figure}[ht!]
\centering
\includegraphics[scale=0.8]{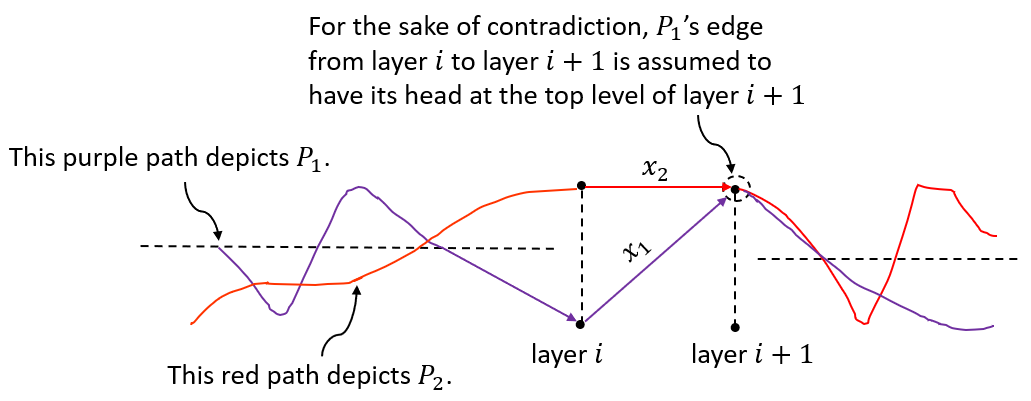}
\caption{}
\end{figure}

So, the edge of $P_1$ must have its head at bottom level of layer $i+1$. Also, using the same argument as in Case 1, the edge of $P_3$ must also have its head at bottom level of layer $j+1$ (same as $i+1$ here). Now, $P_3$'s portion from layers $\leq i+1$ has one $x_3$ or one $y_3$ or no variables. Also, $P_1$'s portion from layers $\geq i+1$ has one $y_1$. So, concatenating $P_3$'s portion from layers $\leq i+1$ with $P_1$'s portion from layers $\geq i+1$ gives us a source-to-sink path in $\Gamma$ whose weight is of the form $c\otimes y_1, c\otimes y_1\otimes x_3$ or $c\otimes y_1\otimes y_3$ for some $c\in \mathbb{R}$ (or $\mathbb{R}_{\geq 0}$). We derive a contradiction in these three cases as follows: In the first case, $(x_1\otimes y_1)\oplus (x_2\otimes y_2)\oplus (x_3\otimes y_3) \leq c\otimes y_1$. Substituting $x_1=x_2=x_3=c+1$ and $y_1=y_2=y_3=0$ gives $c+1\leq c$, a contradiction. In the second case, $(x_1\otimes y_1)\oplus (x_2\otimes y_2)\oplus (x_3\otimes y_3) \leq c\otimes y_1\otimes x_3$. Substituting $y_1=x_2=x_3=0$ and $x_1=y_2=y_3=c+1$ gives $c+1\leq c$, a contradiction. In the third case, $(x_1\otimes y_1)\oplus (x_2\otimes y_2)\oplus (x_3\otimes y_3) \leq c\otimes y_1\otimes y_3$. Substituting $y_1=y_2=y_3=0$ and $x_1=x_2=x_3=c+1$ gives $c+1\leq c$, a contradiction.

\begin{figure}[ht!]
\centering
\includegraphics[scale=0.8]{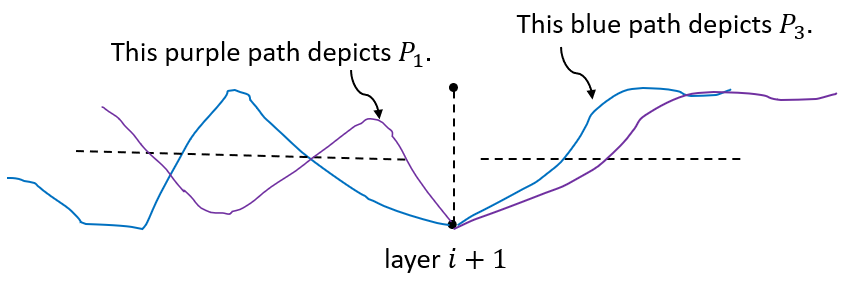}
\caption{}
\end{figure}
\end{appsection}

\begin{appsection}{Adapting Malod \& Portier's proof of $\mathsf{VNP} = \mathsf{VNF}$ over $\mathsf{R}$ and $\mathsf{R}^{+}$}{VNP=VNF adapted proof appendix}

This proof is almost same as the usual proof of $\mathsf{VNP}=\mathsf{VNF}$ over fields (see \cite{malod2008characterizing}). To show $\mathsf{VNP}_{\mathsf{R} (\text{ or } \mathsf{R}^{+})} \subseteq$ $\mathsf{VNF}_{\mathsf{R} (\text{ or } \mathsf{R}^{+})}$, it suffices to show $\mathsf{VP}_{\mathsf{R} (\text{ or } \mathsf{R}^{+})}\subseteq \mathsf{VNF}_{\mathsf{R} (\text{ or } \mathsf{R}^{+})}$. Consider any $p$-family  $(f_n)_{n\geq 1}\in \mathsf{VP}_{\mathsf{R} (\text{ or } \mathsf{R}^{+})}$. Then, $f_n(X)$ has a polynomial-sized circuit which, upon homogenization (see, for example, Lemma 5.2 in \cite{saptharishi2015survey}), can be converted into an equivalent polynomial-sized circuit of polynomially-bounded formal degree and then, can be converted into an equivalent polynomial-sized \emph{multiplicatively disjoint circuit} $C_n$ (i.e., for each of its $\bigotimes$ gates, the subcircuits rooted at both its children are vertex disjoint) using gate-cloning (see proof of Lemma 2 in \cite{malod2008characterizing}). 

A \emph{parse tree} $T$ of the multiplicatively-disjoint circuit $C_n$ is a subgraph of $C_n$ such that: i) The root node of $C_n$ belongs to $T$. ii) For every $\bigoplus$ gate of $C_n$ that belongs to $T$, exactly one edge of the two edges from its children belongs to $T$. iii) For every $\bigotimes$ gate of $C_n$ that belongs to $T$, both edges from its children belong to $T$. Also, the \emph{value} of $T$, denoted as $\operatorname{value}(T)$, is defined as the product of labels of leaf nodes of $C$ that belong to $T$. Note that $f_n(X) =\underset{T: ~T \text{ is a parse tree of } C_n}{\bigoplus   }\operatorname{value}(T)$. Any subgraph $H$ of $C_n$ can specified by the following $\infty$-$0$ assignment of indicator variables $p_{v}\mid_{v \in V(C_n)}$'s and $a_{(u,v)}\mid_{(u,v)\in E(C_n)}$'s: Set $p_v=0$ for every vertex $v\in V(C_n)$ that belongs to $H$, $a_{(u,v)}=0$ for every edge $(u,v)\in E(C_n)$ that belongs to $H$, and all other $p_v$'s \& $a_{(u,v)}$'s as $\infty$. Express $f_n(X)$ as follows:
\begin{equation*}
\begin{split}
f_n(X) &= \underset{\substack{\underline{p}~=~ p_v\mid_{v\in V(C_n)} \in \{\infty, 0\}^{|V(C_n)|}\\
\underline{a} ~= ~a_{(u,v)}\mid_{(u,v)\in E(C_n)}\in \{\infty,0\}^{|E(C_n)|}}}{\bigoplus   }\Big[\substack{\text{The subgraph corresponding to }\\(\underline{p}, \underline{a})\text{ is a parse tree of $C_n$ }}\Big]\otimes  \Big(\underset{\substack{u\in \operatorname{leaves}(C_n):\\p_u=0}}{\bigotimes}\operatorname{label}(u)\Big)\\
& = \underset{\underline{p}\in \{\infty,0\}^{|V(C_n)|}, ~\underline{a}\in \{\infty, 0\}^{|E(C_n)|}}{\bigoplus   } \Big[\substack{\text{The subgraph corresponding to }\\(\underline{p}, \underline{a})\text{ is a parse tree of $C_n$ }}\Big]\otimes  \Big(\underset{u\in \operatorname{leaves}(C_n)}{\bigotimes}\underbrace{\big(\operatorname{label}(u)\otimes  p_{u} ~\oplus    ~\overline{p_{u}} \big)}_{\text{Call } A_{u}. }\Big)
\end{split}
\end{equation*}
Further, we express the indicator $\Big[\substack{\text{The subgraph corresponding to }\\(\underline{p}, \underline{a})\text{ is a parse tree of $C_n$ }}\Big]$ as follows: 
\begin{equation*}
\begin{split}
&  \Bigg(\underset{(u,v)\in E(C_n)}{\bigotimes} \Big[(a_{(u,v)}=0)\Rightarrow (p_u=0~\& ~p_v=0)\Big]\Bigg)\otimes  [p_{\operatorname{root}}=0]\otimes   \bigg(\underset{\substack{u\in V(C_n):\\u \text{ is a $\bigotimes$ gate}}}{\bigotimes}\Big((p_u=0)\Rightarrow \big(\substack{a_{(v,u)}=0 \\ \text{ for both children $v$ of $u$}}\big)\Big)\bigg)\otimes  \\& \bigg(\underset{\substack{u\in V(C_n):\\u \text{ is a $\bigoplus$ gate}}}{\bigotimes}\Big((p_u=0)\Rightarrow \big(\substack{ a_{(v,u)}=0 \\ \text{ for exactly one } \text{ child $v$ of $u$}}\big)\Big)\bigg)\otimes  \bigg(\underset{u\in V(C_n)\setminus \{\operatorname{root}\}}{\bigotimes}\bigg[(p_u=0)\Rightarrow \big(\substack{\geq 1 \text{ out-neighbor $v$ of}\\ u \text{ is such that $p_v=0$}}\big)\bigg]\bigg)
\end{split}
\end{equation*}

\begin{equation*}
\begin{split}
&= \bigg(\underset{(u,v)\in E(C_n)}{\bigotimes}\underbrace{\big(\overline{a_{(u,v)}}\oplus    a_{(u,v)}\otimes  p_u\otimes  p_v\big)}_{\text{Call } B_{(u,v)}.}\bigg)\otimes  p_{\operatorname{root}}\otimes  \bigg(\underset{\substack{u\in V(C_n):\\u \text{ is a $\bigotimes$ gate}}}{\bigotimes}\underbrace{\Big(\overline{p_u} \oplus     ~~p_u\otimes  a_{(\ell(u),u)}\otimes  a_{(r(u),u)}\Big)}_{\text{Call } C_u.}\bigg)\otimes \\
&~~~\bigg(\underset{\substack{u\in V(C_n):\\u \text{ is a $\bigoplus$ gate}}}{\bigotimes}\underbrace{\Big(\overline{p_u}\oplus    p_u\otimes  \Big(a_{(\ell(u),u)}\otimes  \overline{a_{(r(u),u)}}\oplus     a_{(r(u),u)}\otimes  \overline{a_{(\ell(u),u)}}\Big)\bigg)}_{\text{Call } D_u.}\otimes \\
& \Bigg(\underset{u\in V(C_n)\setminus \{\operatorname{root}\}}{\bigotimes   }\underbrace{\big(\overline{p_u}\oplus    p_u\otimes  \underset{v:~(u,v)\in E(C_n)}{\bigoplus   }a_{(u,v)}\big)}_{\text{Call } E_u.}\Bigg),
\end{split}
\end{equation*}
where $\ell(u)$ and $r(u)$ denote the left child and right child of any node $u$ respectively.

Therefore, we have $f_n(X) = \underset{\substack{\underline{p}\in \{\infty,0\}^{|V(C_n)|}\\\underline{a}\in \{\infty,0\}^{|E(C_n)|}}}{\bigoplus   } g_n(X ,\underline{p}, \underline{a}, \overline{\underline{p}}, \overline{\underline{a}})$, where 
\begin{equation*}
g_n:= \Big(\underset{(u,v)\in E(C_n)}{\bigotimes}B_{(u,v)}\Big)\otimes  p_{\operatorname{root}}\otimes  \Big(\underset{\substack{u\in V(C_n):\\ u \text{ is a } \bigotimes \text{ gate}}}{\bigotimes}C_{u}\Big) \otimes  \Big(\underset{\substack{u\in V(C_n):\\ u \text{ is a } \bigoplus \text{ gate}}}{\bigotimes}D_{u}\Big) \otimes  \Big(\underset{u\in V(C_n)\setminus \{\operatorname{root}\}}{\bigotimes   }E_u\Big) \otimes  \Big(\underset{u\in \text{leaves}(C_n)}{\bigotimes}A_{u}\Big).
\end{equation*}
As $B_{(u,v)}$'s, $p_{\text{root}}$, $C_u$'s, $D_u$'s, $E_u$'s and $A_{u}$'s have polynomial-sized formulas, so does $g_n$. Thus, $(f_n)_{n\geq 1}\in \mathsf{VNF}_{\mathsf{R} (\text{or }\mathsf{R}^{+})}$.
\end{appsection}

\bibliography{references}

\newpage

\ifthenelse{\equal{\movetoappendix}{1}}{
       \appendix
       \section{Appendix}
       \includecollection{appendix}
} { }


\end{document}